\newcommand*{\addFileDependency}[1]{
  \typeout{(#1)}
  \@addtofilelist{#1}
  \IfFileExists{#1}{}{\typeout{No file #1.}}
}
\newtheorem{theorem}{Theorem}
\theoremstyle{definition}
\newtheorem{definition}{Definition}
\newtheorem*{remark}{Remark}
\newtheorem{assumption}{Assumption}
\newtheorem{lemma}{Lemma}
\newtheorem{proposition}{Proposition}
\def\eqref#1{(\ref{#1})}
\def\1{\bm{1}}
\def\gB{{\mathcal{B}}}
\def\gF{{\mathcal{F}}}
\def\argmax{\text{arg\,max}}
\def\argmin{\text{arg\,min}}
\crefname{assumption}{assumption}{assumptions}
\crefname{Assumption}{Assumption}{Assumptions}
\newcommand{\multilines}[1]{%
	\begin{tabularx}{\dimexpr\linewidth-\ALG@thistlm}[t]{@{}X@{}}
		#1
	\end{tabularx}
}
\newcommand{\OurAlg}{\ensuremath{\small{\textsf{WaSeCom}}}\xspace}
\Crefname{figure}{Fig.}{Figs.}
\Crefname{table}{Table}{Tables}
\Crefname{section}{Sec.}{Secs.}
\DeclarePairedDelimiterX{\norm}[1]{\lVert}{\rVert}{#1}
\DeclarePairedDelimiterX{\abs}[1]{\lvert}{\rvert}{#1}
\newcommand{\set}[1]{\left\lbrace#1\right\rbrace}
\DeclareSymbolFont{extraup}{U}{zavm}{m}{n}
\DeclareMathSymbol{\varheart}{\mathalpha}{extraup}{86}
\newcommand{\defeq}{\vcentcolon=}
\newcommand{\eqdef}{=\vcentcolon}
\newcommand{\bigS}[1]{\ensuremath{\bigl[#1\bigr]}}
\newcommand{\bigC}[1]{\ensuremath{\bigl\{#1\bigr\}}}
\newcommand{\biggS}[1]{\ensuremath{\biggl[#1\biggr]}}
\newcommand{\biggC}[1]{\ensuremath{\biggl\{#1\biggr\}}}
\newcommand{\BigP}[1]{\ensuremath{\Bigl(#1\Bigr)}}
\newcommand{\BigS}[1]{\ensuremath{\Bigl[#1\Bigr]}}
\newcommand{\BigC}[1]{\ensuremath{\Bigl\{#1\Bigr\}}}
\newcommand{\BiggS}[1]{\ensuremath{\Biggl[#1\Biggr]}}
\newcommand{\blue}[1]{{\color{black}#1}\xspace}
\newcommand{\myquad}[1][1]{\hspace*{#1em}\ignorespaces}
\begin{document}

\title{Distributionally Robust Wireless Semantic Communication with Large AI Models}

\author{Long Tan Le,
        Senura Hansaja Wanasekara,
        Zerun Niu, 
        Nguyen H. Tran,~\IEEEmembership{Senior Member,~IEEE,} \\
        Phuong Vo, 
        Walid Saad,~\IEEEmembership{Fellow,~IEEE,}
        Dusit Niyato,~\IEEEmembership{Fellow,~IEEE,}
        Zhu Han,~\IEEEmembership{Fellow,~IEEE,}
        Choong~Seon~Hong,~\IEEEmembership{Fellow,~IEEE,}
        H.~Vincent Poor,~\IEEEmembership{Life Fellow,~IEEE.}
        
\thanks{L.~T.~Le, N.~H.~Tran, S.~H.~Wanasekara, and Z.~Niu are with the School of Computer Science, The University of Sydney, Darlington, NSW 2006, Australia (email: \{long.le, nguyen.tran\}@sydney.edu.au, \{zniu9834, wwan0281\}@uni.sydney.edu.au). 

P.~Vo is with the School of Computer Science and Engineering, International University-VNUHCM, Ho Chi Minh City 70000, Vietnam
(e-mail: vtlphuong@hcmiu.edu.vn). 

W.~Saad is with the Bradley Department of Electrical and
Computer Engineering, Virginia Tech, Alexandria, VA, 22305 USA (e-mail:
walids@vt.edu).

D.~Niyato is with the College of Computing and Data Science,
Nanyang Technological University, Singapore (e-mail: dniyato@ntu.edu.sg).

Z.~Han is with the Department of Electrical and Computer Engineering, University of Houston, Houston, TX, USA (email: zhan2@mail.uh.edu).

C. S. Hong and N.~H.~Tran are with the Department of Computer Science and Engineering, School of Computing, Kyung Hee University, Yongin-si 17104, Republic of Korea (email: cshong@khu.ac.kr).

H.~V.~Poor is with the School of Engineering and Applied Science, Princeton University, Princeton, NJ 08544 USA (e-mail: poor@princeton.edu). 
}
}

\markboth{Journal of \LaTeX\ Class Files,~Vol.~14, No.~8, August~2021}%
{Shell \MakeLowercase{\textit{et al.}}: A Sample Article Using IEEEtran.cls for IEEE Journals}


\maketitle

\begin{abstract}

%

\blue{Semantic communication (SemCom) has emerged as a promising paradigm for 6G wireless systems by transmitting task-relevant information rather than raw bits, yet existing approaches remain vulnerable to dual sources of uncertainty: semantic misinterpretation arising from imperfect feature extraction and transmission-level perturbations from channel noise. Current deep learning based SemCom systems typically employ domain-specific architectures that lack robustness guarantees and fail to generalize across diverse noise conditions, adversarial attacks, and out-of-distribution data.} In this paper, a novel and generalized semantic communication framework called \OurAlg is proposed to systematically address uncertainty and enhance robustness. In particular, Wasserstein distributionally robust optimization is employed to provide resilience against semantic misinterpretation and channel perturbations. A rigorous theoretical analysis is performed to establish the robust generalization guarantees of the proposed framework. Experimental results on image and text transmission demonstrate that \OurAlg achieves improved robustness under noise and adversarial perturbations. These results highlight its effectiveness in preserving semantic fidelity across varying wireless conditions.


\end{abstract}

\begin{IEEEkeywords}
Semantic Communication, Wireless Networks, Large AI Models.
\end{IEEEkeywords}
\section{Introduction}
\label{sec:intro}

The sixth generation (6G) of wireless cellular networks must be designed to handle massive data volumes, ultra-low latency, and extensive connectivity, thus addressing the increasingly sophisticated demands of emerging applications~\cite{semcom_app}. However, traditional communication paradigms, which primarily focus on the accurate transmission of raw data bits, are becoming inadequate for effectively meeting the stringent requirements of emerging data-intensive and latency-sensitive applications. To address these challenges, the concept of semantic communication (SemCom) emerged as a novel paradigm aimed at enhancing communication efficiency by transmitting task-relevant semantic information rather than raw data~\cite{semcom_survey}. By focusing on the semantic content, i.e., the meaning and relevance of information in the context of specific tasks, SemCom can significantly reduce bandwidth requirements, mitigate latency, and enhance robustness to interference and noise~\cite{semcom_principle}. 
These characteristics make SemCom particularly promising for scenarios requiring real-time decision-making and resilient communication, including autonomous driving, remote surgery, intelligent transportation systems, and time-critical industrial automation.



Building on this foundation, the integration of machine learning, particularly deep learning, has significantly advanced semantic communication by automating the extraction, representation, and interpretation of semantic content~\cite{semcom_app}. Early deep learning-based SemCom methods like those used in~\cite{deepjscc, deepjscc_text, deepsc_st} typically adopt modality-specific architectures. Despite being effective in specialized contexts, the adaptability and generalizability of the methods in \cite{deepjscc, deepjscc_text, deepsc_st} remain limited due to reliance on domain-specific knowledge and handcrafted features. More recently, the emergence of large-scale artificial intelligence (AI) models, such as transformers~\cite{transformer} and large language models (LLMs)~\cite{llm}, has transformed the landscape of SemCom. \blue{AI techniques like GPT~\cite{gpt} and causal reasoning~\cite{casual_reasoning} allow a network to leverage vast datasets and advanced training methodologies. As a result, these models can capture complex semantic relationships across diverse data modalities effectively. Consequently, such large-scale architectures can substantially enhance the encoding and decoding accuracy and adaptability of SemCom, and thus making these advanced systems particularly valuable in wireless environments.}

Despite the promising advancements in SemCom, existing systems remain inherently vulnerable to noise and uncertainty, which pose significant challenges to their reliability in wireless networks. This vulnerability largely stems from the fact that SemCom operates on high-level semantic representations, which are more sensitive to perturbations than traditional bit-level signals. These perturbations can originate from two fundamental sources: \textit{semantic-level noise}, caused by ambiguity or errors in extracting and interpreting task-relevant meaning; and \textit{transmission-level noise}, resulting from distortions during wireless propagation~\cite{semcom_principle}. While recent efforts have explored robustness techniques to mitigate these effects, most existing solutions are developed under constrained assumptions or target specific use cases~\cite{robust_sc_1, masked_vqae, robust_sc_2, robust_sc_3, robust_sc_4} and, hence, they lack the flexibility to generalize across varying tasks, modalities, and network conditions. As a result, current SemCom systems remain vulnerable when deployed in wireless environments. This underscores the need for a unified approach to improving robustness against both semantic and channel-level uncertainties.


The main contribution of this paper is a novel semantic communication framework, called \textit{Wasserstein distributionally robust wireless semantic communication} (\OurAlg). The proposed framework is designed to enhance robustness against both semantic-level and transmission-level uncertainties, and to generalize across diverse tasks and dynamic wireless environments by explicitly accounting for variability in semantic content and wireless channel conditions. Specifically, \OurAlg formulates a bilevel optimization framework grounded in Wasserstein distributionally robust optimization (WDRO)~\cite{kuhn_wasserstein_2019, gao_wasserstein_2020}. The inner-level problem addresses semantic-level noise by optimizing semantic encoding under worst-case input perturbations, while the outer-level problem mitigates channel impairments by learning transmission strategies that are robust to channel variability. This joint modeling of semantic and transmission uncertainties allows the framework to explicitly handle distinct sources of noise in a principled manner. Furthermore, \OurAlg is model-agnostic and can be integrated with a range of large AI model architectures,  supporting its applicability across different semantic communication scenarios. 

In summary, our key contributions include:

\begin{itemize}
\item  We propose \OurAlg, a novel robust, model-agnostic SemCom framework based on WDRO. The framework is formulated as a bilevel problem to jointly address semantic-level and channel-level uncertainties.

\item We develop a novel algorithm to solve the bilevel problem in \OurAlg by leveraging the dual formulations of both the inner and outer problems. This enables tractable training and supports end-to-end optimization under variability in semantic inputs and wireless channel conditions.

\item We establish theoretical generalization bounds for both optimization levels in \OurAlg, characterizing how the learned semantic and channel models perform under worst-case input perturbations and channel variability, with formal robustness guarantees.

\item We conduct extensive experiments on image and text SemCom tasks. \OurAlg matches state-of-the-art performance under clean conditions and demonstrates greater robustness under semantic perturbations and channel degradations, with consistently more stable PSNR, SSIM, and BLEU trends in noisy scenarios.
\end{itemize}

The rest of this paper is structured as follows. Section~\ref{sec:prev_work} provides an overview of the relevant background and prior works that are closely related to our topics of interest. Section~\ref{sec:prob_formulation} presents our proposed framework, including problem formulation and algorithm designs. Numerical results are discussed in Section~\ref{sec:experiment}, followed by the conclusion in Section~\ref{sec:conclusion}.
\section{Background and Related Works}
\label{sec:prev_work}

This section presents the foundational concepts of semantic communication (SemCom), distributionally robust optimizaton and the role of large AI models. We first introduce the core principles and then analyze the existing limitations, thereby establishing the need for a robust SemCom framework as proposed in this paper.

\subsection{\blue{Principles and Challenges of AI-Enabled Wireless Semantic Communication}}

\begin{figure*}[!t]
\centering
\includegraphics[width=0.75\textwidth]{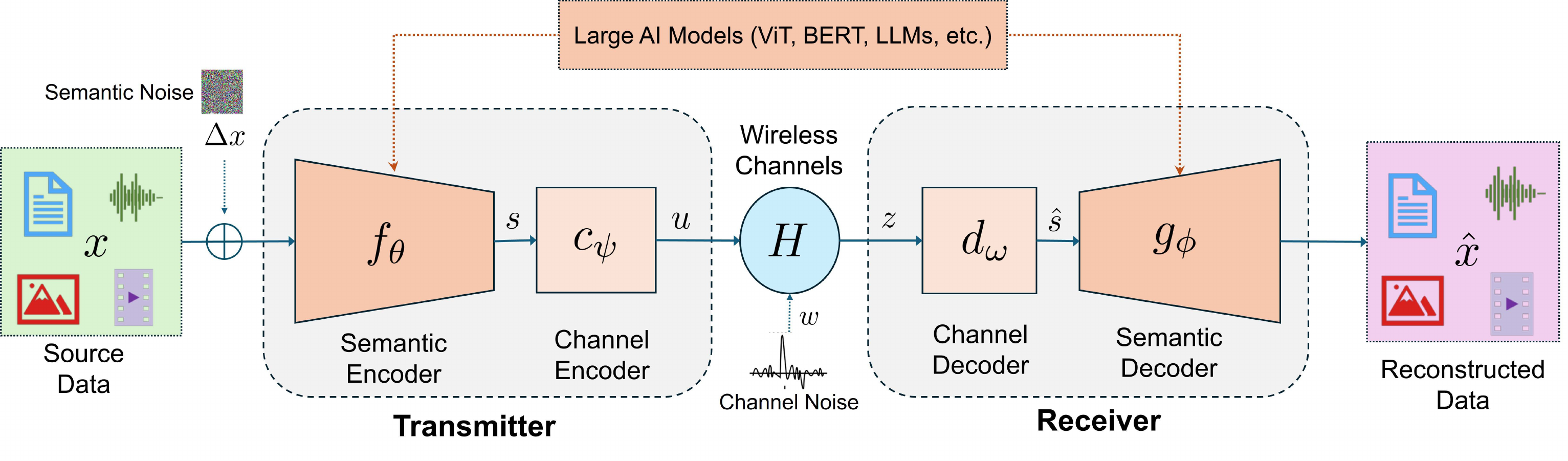}
\caption{Large AI-enabled wireless semantic communication. Source data is encoded into compact, task-relevant representations and transmitted over wireless channels; the receiver reconstructs the intended meaning, even under semantic and channel noise.}
\label{fig:system}
\end{figure*}

\blue{
Wireless semantic communication transmits the meaning of data rather than exact bit sequences~\cite{semcom_principle}. 
Unlike traditional systems that prioritize bit-level fidelity and quality-of-service metrics (e.g., low bit error rate, high signal-to-noise ratio)~\cite{traditional_com}, SemCom aims for fidelity in meaning or task outcome~\cite{semcom_principle, semcom_survey}. 
Recent advances in AI-driven learning have enabled practical realizations of this concept, which allows systems to learn semantic representations and transmit them under adverse channel conditions~\cite{6773024, 8054694}. This capability aligns with Shannon’s vision of semantic-level communication~\cite{shannon} and offers improved resilience in wireless environments. Seminal works have demonstrated these benefits across various modalities. For transmitting text, the work in~\cite{deepsc} introduced end-to-end semantic encoding using deep learning for improving robustness in noisy channels. This work was extended in~\cite{deepsc_st} and~\cite{mem_deepsc} to account for synthesizing audio and to incorporate context-aware question answering. For visual data, autoencoders have been used to transmit images directly over wireless channels~\cite{deepjscc, 10654371}. In the video domain, the solution of~\cite{wireless_deep_video} relied on the use of semantic features and temporal redundancy for efficient video streaming. These works collectively demonstrate that semantic-aware communication enhances efficiency across modalities.

The integration of large-scale AI models, such as Transformers~\cite{transformer} and large language models (LLMs)~\cite{gpt}, has further advanced SemCom systems. The authors in~\cite{deepjscc_text}  employed a pre-trained BERT model~\cite{bert} to enhance text semantic reconstruction in noisy channels. The work in~\cite{semcom_vit} leveraged vision transformer-based models~\cite{vit} to encode semantic features for joint source-channel coding. Similarly, the authors in~\cite{wscs_video} applied large vision-language models~\cite{vit} to video semantic transmission, enabling task-specific feature extraction and cross-modal inference. More recently, GPT-style models are integrated for end-to-end text communication ~\cite{llm_semcom}, highlighting the potential of generative language models in preserving semantic meaning across variable channel conditions.

Despite their promising performance, the existing SemCom methods are generally designed and optimized under nominal conditions and do not explicitly account for robustness to variations in semantic inputs or wireless channel conditions. By operating on high-level, abstract representations, these systems become highly sensitive to two distinct and often coupled sources of uncertainty: \textit{semantic-level noise}, arising from ambiguity or perturbations in the source data, and \textit{transmission-level noise}, encompassing distortions from the physical wireless channel. While seminal AI-based SemCom methods have shown improved resilience over traditional methods, they are typically optimized for average-case performance and lack a formal framework for handling these dual, worst-case uncertainties.}

\subsection{Robustness and Generalization in Wireless SemCom}

Several robustness-oriented strategies have been proposed, typically tailored to specific modalities or noise types.
For semantic-level noise, various methods introduce customized architectures to mitigate input perturbations. In image transmission, masked vector quantization~\cite{masked_vqae} and multi-scale semantic extraction~\cite{robust_sc_2} enhance robustness by improving the semantic representation of visual features. In text-based SemCom, a semantic corrector with non-autoregressive decoding~\cite{robust_sc_4} was used to address categorized semantic impairments. For speech, the framework in~\cite{robust_sc_3} integrated a GAN-based compensator and a semantic probe to preserve intelligibility under semantic distortions. Additionally,~\cite{walid_1} proposed a neuro-symbolic approach that combines signaling games and causal reasoning for context-aware, semantically reliable communication using minimal bits. Regarding channel-level noise, recent works have explored adaptive encoding strategies to improve robustness under time-varying or degraded channel conditions. Examples include transfer learning-based noise estimation~\cite{robust_channel_2} and feedback-aware encoding schemes~\cite{robust_channel}, both of which enhance reconstruction quality. Other efforts focus on improving generalizability and bandwidth efficiency while preserving semantic reliability across dynamic environments~\cite{walid_2}. Although these methods contribute to improved robustness, they are often modality-specific, architecture-dependent, or focused on a single type of noise. 

\blue{
While the works in~\cite{masked_vqae, robust_sc_2, robust_sc_3, robust_sc_4, walid_1, robust_channel_2, robust_channel, walid_2} show the potential of deploying AI for SemCom, they are constrained by three key limitations that motivate our research. First, these methods are predominantly modality-specific, designed for either text, images, or speech, which hinders their generalizability across different communication tasks. Second, they rely on fixed semantic encoding strategies and do not explicitly account for variability or uncertainty in either semantic inputs or channel conditions. Third, these works do not provide formal guarantees on robustness or generalization, especially under worst-case scenarios. These limitations motivate the development of a unified, model-agnostic framework that leverages large AI models while systematically addressing both semantic and transmission-level uncertainties.
}

\blue{
\subsection{Wasserstein Distributionally Robust Optimization}
\label{sec:wdro}

Distributionally Robust Optimization (DRO) is a paradigm designed to handle data uncertainty by training AI models against a ``worst-case" distribution within a predefined ambiguity set, which is distinct from the standard empirical average of the training data~\cite{dro_2016, dro_2018}. This approach yields models that are more resilient to the distributional shifts common in dynamic wireless environments, such as those caused by user mobility, channel fading, or adversarial interference~\cite{sinha_certifying_2020}. While various metrics can be used to define this ambiguity set~\cite{dro_2016, dro_2018}, a particularly powerful variant is Wasserstein DRO (WDRO), which uses the \textit{Wasserstein distance}~\cite{wassertein, kuhn_wasserstein_2019}. Such a metric that quantifies the minimal cost of transporting one probability distribution to another, defined as follows.
\begin{definition}
The \(p\)-\emph{Wasserstein distance}, which measures the cost of transporting probability mass between distributions \(P\) and \(Q\), is defined as:
\begin{align}
    W_p(P, Q) = \inf_{\pi \in \Pi(P, Q)} \left(\mathbf{E}_{(Z, Z') \sim \pi}\left[d^p(Z, Z')\right]\right)^{1/p}.
\end{align}
Here, \(\Pi(P, Q)\) represents the set of all possible joint distributions \(\pi\) with marginals \(P\) and \(Q\), respectively. The random variables $Z \sim P$ and $Z' \sim Q$ represent samples drawn from the respective distributions under the coupling $\pi$, and $d$ is a predefined ground distance metric. The Wasserstein ball, $\mathcal{B}_p(P, \rho) \defeq \{Q: W_p(P, Q) \leq \rho\}$, defines the set of all distributions $Q$ within a $p$-Wasserstein distance $\rho$ from $P$. 
\end{definition}

This metric provides a \textit{geometry-aware} approach to modeling distributional variability, which is essential for the high-dimensional, continuous feature spaces used in semantic communication where the distance between representations is meaningful. Unlike other divergence measures that may not account for the underlying structure of the data space, this geometric sensitivity allows WDRO to model realistic semantic perturbations more effectively. Furthermore, WDRO is designed to build its ambiguity set around the \textit{empirical distribution} of training data~\cite{kuhn_wasserstein_2019}. This makes it effective for practical, data-driven applications, as it can gracefully handle the discrete nature of training samples. The WDRO objective also benefits from a tractable dual reformulation, which enables efficient gradient-based optimization even for deep neural networks~\cite{azizian2023exact}.

}


\section{Wasserstein Distributionally Robust Wireless Semantic Communication}
\label{sec:prob_formulation}

In this section, we first describe a general system model, highlighting key challenges related to semantic and channel noise in wireless SemCom. We then introduce the proposed generalized robust framework designed to tackle the identified challenges. We then elaborate on the algorithm design that emphasizes our proposed method.
\vspace{-5pt}

\subsection{\blue{System Model}}
\begin{figure}[t]
    \centering
    \includegraphics[width=0.45\linewidth]{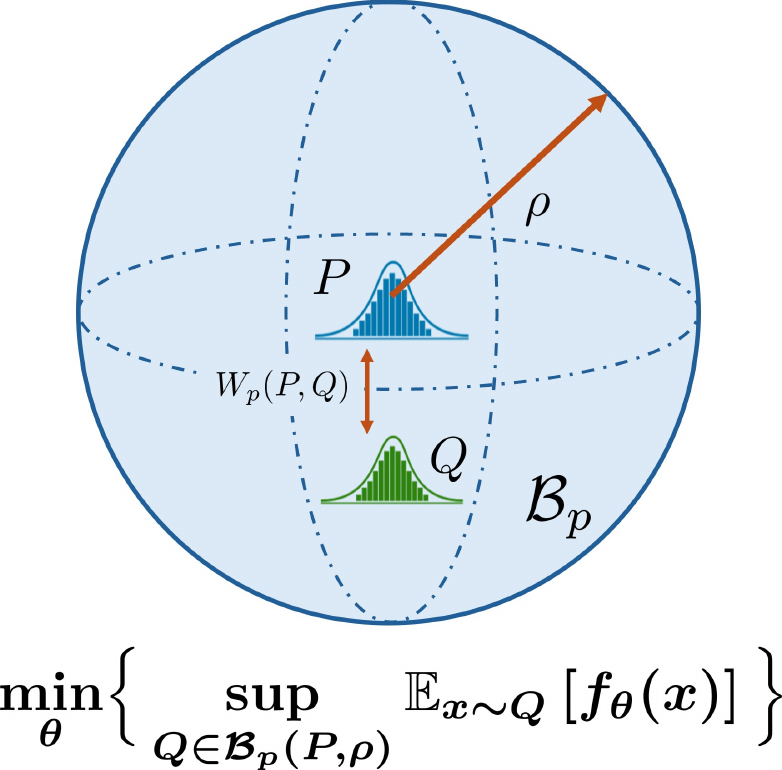}
    \caption{WDRO aims to find model parameters $\theta$ that minimize the worst-case expected objective $f_\theta$ assuming the true data distribution $Q$ is within a small Wasserstein ball $\mathcal{B}_p$ of the empirical distribution $P$.}
    \label{fig:wdro}
\end{figure}

\begin{figure*}[t]
    \centering
    \includegraphics[width=0.7\linewidth]{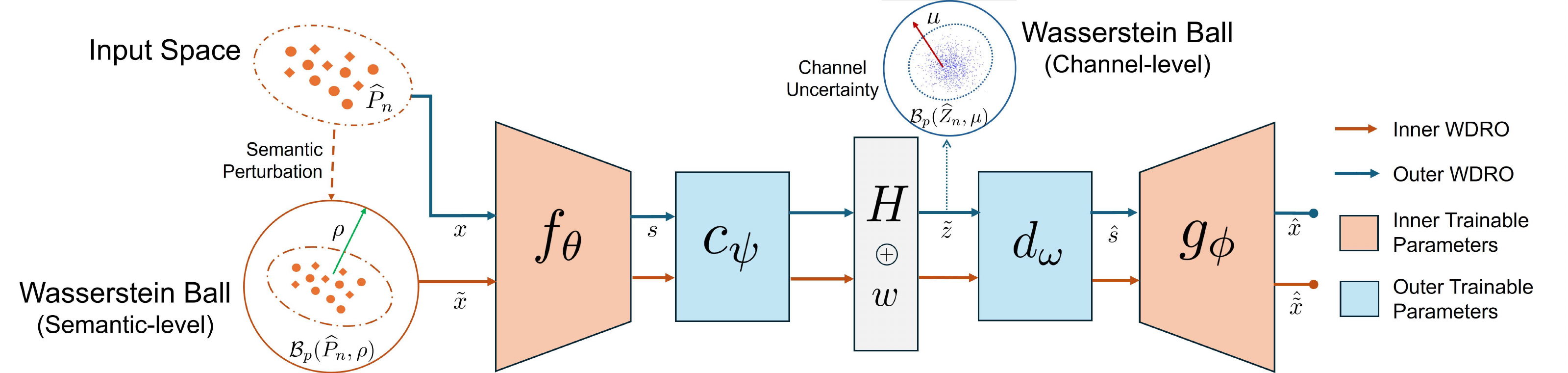}
    \caption{Overview of the \OurAlg framework. The proposed bilevel WDRO model jointly optimizes semantic and channel encoder-decoder pairs for robustness. The inner level (orange) handles semantic input shifts, while the outer level (blue) addresses channel noise.
    }
    \label{fig:bilevel_wdro}
    \vspace{-5pt}
\end{figure*}


We consider an AI-enabled wireless SemCom system (WSCS), depicted in \Cref{fig:system}, designed to transmit the essential meaning embedded within multimodal data such as text, audio, images, or video, instead of raw data bits. The WSCS architecture typically consists of four primary components: a semantic encoder \(f_{\theta}\), a channel encoder \(c_{\psi}\), a channel decoder \(d_{\omega}\), and a semantic decoder \(g_{\phi}\), each parameterized by \(\theta\), \(\psi\), \(\omega\), and \(\phi\), respectively.

Initially, the semantic encoder \(f_{\theta}: \mathcal{X} \to \mathcal{S}\) processes the input data \(x \in \mathcal{X}\), converting it into a concise \emph{semantic representation} \(s = f_{\theta}(x)\) that captures its underlying meaning. This semantic vector \(s \in \mathcal{S}\) is crucial as it contains the distilled essence of the input, optimized for comprehension rather than for bit accuracy. The channel encoder \(c_{\psi}: \mathcal{S} \to \mathcal{U}\) then takes this semantic vector and encodes it into a robust transmittable signal \(u = c_{\psi}(s)\), specifically formatted to withstand the physical limitations and noise characteristics of the wireless channel \(H\). Upon transmission, the signal undergoes various distortions due to noise, fading, or interference, resulting in a corrupted signal \(z \in \mathcal{Z}\) received by the channel decoder~\cite{tse2005fundamentals}. The channel decoder \(d_{\omega}: \mathcal{Z} \to \mathcal{S}\) is responsible for reconstructing the semantic vector \(\hat{s} = d_{\omega}(z)\) from this corrupted signal, effectively filtering out the distortions introduced by the channel. Finally, the semantic decoder \(g_{\phi}: \mathcal{S} \to \mathcal{X}\) takes over to extract and reconstruct the final semantic content \(\hat{x} = g_{\phi}(\hat{s})\), ensuring that the transmitted meaning is accurately recovered.  

In deep learning-based SemCom systems, deep neural networks are typically used for both the encoding and decoding processes to extract, transmit, and reconstruct semantic information. These models used empirical risk minimization (ERM), which minimizes a loss function \(L\) that quantifies the discrepancy between the original input \(x\) and the reconstructed output \(\hat{x}\)~\cite{deepsc,deepjscc}. The learning objective is to minimize the expected loss over the distribution of input data and channel conditions:
\begin{equation}
\min_{\theta, \psi, \omega, \phi} \mathbb{E}_{x \sim \mathcal{X}} [L(x; \theta, \psi, \omega, \phi)],
\end{equation}
where the loss function \(L\) is selected based on the reconstruction objective of the task. We primarily adopt mean squared error (MSE), as it provides a simple yet effective measure of semantic distortion in continuous feature spaces, such as images or latent representations. For the semantic encoder $\theta$ and decoder $\phi$, we can incorporate large AI models like those based on Vision Transformers (ViT)~~\cite{vit} for visual inputs and BERT~~\cite{bert} for textual data due to their proven ability to extract high-level semantic features. This design choice enables our system to generalize across modalities while maintaining semantic fidelity under varying input conditions.

\blue{\textbf{The Challenge of Dual Noise Sources:}}
Conventional approaches to wireless SemCom often adopt the joint source-channel coding (JSCC) paradigm, in which semantic and channel components are optimized under a unified ERM objective~\cite{deepjscc, deepjscc_text}. Despite demonstrating satisfactory performance in controlled settings, such as fixed channel models with stationary noise or limited variability, JSCC solutions lack an explicit separation between semantic representation learning and channel adaptation. 
In particular, the semantic encoder is trained jointly with the channel encoder and decoder to minimize reconstruction loss, thereby encoding not only task-relevant information but also channel-specific statistical features present during training. This might reduce their generality and effectiveness
when deployed in dynamic environments with varying signal-to-noise ratio (SNR) or fading behaviors.
Consequently, the system may not able to maintain semantic fidelity and task relevance across diverse wireless scenarios. This highlights the need for a decoupled design that can independently optimize for semantic expressiveness and channel resilience.

Furthermore, the performance of wireless SemCom systems is fundamentally affected by two types of noise: channel noise and semantic noise. \textit{Channel noise} arises from physical-layer impairments such as thermal noise, fading, and interference~\cite{tse2005fundamentals}, which distort the transmitted signal and may persist despite conventional error correction, particularly in low-SNR or time-varying environments. In contrast, \textit{semantic noise }refers to distortions in meaning that occur even when the signal is correctly decoded, often due to the semantic encoder’s sensitivity to input perturbations or distributional shifts. Adversarial examples or out-of-distribution data~\cite{goodfellow_explaining_2015} can cause the encoder to generate unstable representations that fail to preserve the intended semantics. Notably, this type of degradation arises at the semantic level, independent of the physical channel quality.

While existing SemCom methods have made progress in enhancing robustness, they often focus on mitigating either channel noise or semantic noise in isolation, and are typically designed for specific data modalities such as text~\cite{deepjscc_text} or images~\cite{masked_vqae}. This limits their applicability in more general settings involving diverse input types and jointly occurring distortions. These challenges underscore the need for a unified approach that can jointly handle both channel- and semantic-level uncertainties in a modality-agnostic manner. 
\subsection{\blue{\OurAlg Framework}}
\blue{To address the aforementioned challenges in modern SemCom systems, we propose a novel optimization framework, namely Wasserstein distributionally robust wireless SemCom (\OurAlg). 
Grounded in the principles of WDRO~\cite{kuhn_wasserstein_2019} discussed in~\Cref{sec:wdro}, our approach is designed to systematically and robustly address both semantic and channel-level noise through a novel bilevel formulation.} Unlike conventional ERM, which assumes the training distribution accurately reflects deployment conditions, WDRO explicitly accounts for distributional uncertainty. As illustrated in \Cref{fig:wdro},  WDRO minimizes the worst-case expected loss over a set of distributions within a bounded Wasserstein distance from the empirical distribution. This enables the learned model to remain robust against a wide range of real-world uncertainties, including shifts in semantic content, out-of-distribution inputs, and unpredictable channel conditions, that are common in wireless SemCom systems. 

Building on the foundation of WDRO, we formulate \OurAlg as a bi-level WDRO framework that jointly optimizes the semantic and channel encoding-decoding processes, with each level addressing a distinct source of uncertainty. The \textit{inner level} focuses on robustness to semantic variability in the input representations, while the \textit{outer level} accounts for stochastic perturbations introduced by the wireless transmission channel.  By decoupling and optimizing these two components, the proposed framework improves end-to-end robustness, and thus ensuring high semantic fidelity and reliable communication under heterogeneous and time-varying network conditions.

As illustrated in~\Cref{fig:bilevel_wdro}, the system is trained on \( n \) i.i.d. samples \( \{x_i\}_{i=1}^n \) drawn from an unknown true distribution \( P \), which is approximated by its empirical counterpart \( \widehat{P}_n \). Each input \( x_i \) is mapped to a semantic representation \( s_i = f_\theta(x_i) \) via the semantic encoder \( f_\theta \), 
where \( \delta_{f_\theta(x_i)} \) is the Dirac delta measure centered at the encoded sample. The semantic embedding \( s_i \) is passed through a channel encoder \( c_\psi \), producing a signal that is transmitted through a stochastic channel. The channel introduces distortion via the transformation \( z_i = h \, c_\psi(s_i) + w \), where $h$ is a realization of a random channel state \( H \sim Q_0 \) drawn from a nominal distribution \( Q_0 \), and \( w \) is additive noise such as additive white Gaussian noise (AWGN) or Rayleigh fading. The received signal \( z \) has the empirical distribution over these encoders and channel noise as follows:
\[
\widehat{Z}_n := \frac{1}{n} \sum\nolimits_{i=1}^n \delta_{z_i}, \quad \text{where } z_i = h \cdot c_\psi(f_\theta(x_i)) + w.  
\]
The received signal \( z \) is then decoded by the channel decoder \( d_\omega \) to obtain \( \hat{s} = d_\omega(z) \), which is further mapped back to the semantic space via the decoder \( g_\phi \) to reconstruct the original input as \( \hat{x} = g_\phi(\hat{s}) \).

To formally model uncertainties in both the semantic input and channel transmission process, we define two separate \emph{Wasserstein ambiguity sets}. The first set, \( \mathcal{B}_p(\widehat{P}_n, \rho) \), captures potential semantic-level distributional shifts around the empirical distribution \( \widehat{P}_n \) within radius \( \rho \). The second set, \( \mathcal{B}_p(\widehat{Z}_n, \mu) \), accounts for uncertainties in the distribution of received channel signals \( z \), centered around the nominal distribution \( \widehat{Z}_n \) induced by the channel model under \( Q_0 \) and noise \( w \), with radius \( \mu \). These sets enable a principled treatment of distributional robustness at both semantic and physical layers of the communication system. Based on this setup, the overall objectives for the bi-level problem are:
\begin{align}
\text{\textsc{inner:}} \quad 
& \min_{\theta, \phi} \sup_{Q \in \mathcal{B}_p(\hat{P}_n, \rho)} 
    \mathbb{E}_{x \sim Q}\Big[\ell_{\text{s}}(x, \hat{x}) \mid \psi, \omega\Big] 
    \label{eq:inner} \\
& \qquad~\quad \text{s.t.} \quad 
    \hat{x} = g_\phi(d_\omega(z)) \nonumber\\
 & \qquad~\quad\phantom{\text{s.t.} \quad}  z = h \, c_\psi(f_\theta(x)) + w  \nonumber \\[1ex]
\text{\textsc{outer:}} \quad 
& \min_{\psi, \omega} \sup_{Z \in \mathcal{B}_p(\hat{Z}_n, \mu)} 
    \mathbb{E}_{z \sim Z}\Big[\ell_{\text{c}}(s, \hat{s}) \mid \theta^*\Big]
    \label{eq:outer} \\
& \qquad~\quad \text{s.t.} \quad 
    \hat{s} = d_\omega(z) 
    \nonumber \\
& \qquad~\quad\phantom{\text{s.t.} \quad} 
    s = f_{\theta^*}(x), x \sim \widehat{P}_n,\nonumber
\end{align}

Here $\theta^*$ is the optimal solution obtained from the inner problem. In the inner problem, \(\ell_{\text{s}}(\cdot)\) represents the \textit{semantic reconstruction loss}, measuring the discrepancy between the original input \(x\) and the recovered output \(\hat{x} = g_\phi(d_\omega(z))\), where \(z = h \, c_\psi(f_\theta(x)) + w \). Similarly, in the outer problem,  \(\ell_{\text{c}}(\cdot)\) represents the \textit{channel distortion loss}, which evaluates the distortion between the transmitted semantic representation \(s = f_\theta(x)\) and its recovered version \(\hat{s} = d_\omega(z)\).

\textbf{Inner Level -- Robust Semantic Encoding and Decoding:}
The inner-level objective~\Cref{eq:inner} addresses uncertainty stemming from semantic noise, including misinterpretations, ambiguity, adversarial perturbations, and distributional shifts in the input space (given by $\Delta x$ in \Cref{fig:system}). These challenges are modeled through the Wasserstein ambiguity set $\mathcal{B}_p(\widehat{P}_n, \rho)$, which captures possible semantic variations around the empirical input distribution. The semantic encoder $f_\theta$ and decoder $g_\phi$ are trained to minimize the worst-case semantic reconstruction loss within this uncertainty set, thereby enhancing robustness to input-level perturbations.

A key capability of the inner-level formulation in \OurAlg is its model-agnostic nature. It imposes no constraints on the choice of model architecture or data modality, enabling broad applicability across different communication scenarios. Depending on the task, the semantic encoder-decoder pair can be instantiated using transformer-based models such as BERT~\cite{bert} for textual data, ViT~\cite{vit} for visual inputs, wav2vec~\cite{wav2vec} for audio signals, or multimodal encoders for composite inputs. The semantic loss function $\ell_{\text{s}}(\cdot)$ can also be flexibly chosen to match the modality and task—for example, cross-entropy for classification, mean squared error (MSE) for reconstruction tasks, or perceptual similarity measures for vision or audio applications.

\textbf{Outer Level --  Robust Channel Encoding and Decoding:}
The outer level \Cref{eq:outer} targets physical-layer uncertainties such as channel fading, interference, and signal distortions. It optimizes the channel encoder \(c_\psi\) and decoder \(d_\omega\) to mitigate transmission noise, by minimizing the worst-case distortion under perturbations in the received signal distribution \(\mathcal{B}_p(\widehat{Z}_n, \mu)\). This level may employ either conventional channel coding techniques or deep neural layers trained to be robust under stochastic channel conditions. 
MSE is also a common choice for the channel loss \(\ell_{\text{c}}(\cdot,\cdot) \) when the semantic representation is continuous.

\blue{
It is worth noting that $\rho$ and $\mu$ are independent hyperparameters that operate in distinct spaces, the semantic input space and the channel output space, respectively, and thus cannot be directly compared or jointly optimized through simple scaling, as they govern robustness against different sources of uncertainty.}

\blue{
Together, the bi-level WDRO formulation in \OurAlg systematically addresses distributional uncertainties at both the semantic and channel levels.
This formulation does not treat the two noise sources as independent; rather, it models their \textit{hierarchical dependency}. The inner-level optimization for semantic robustness is rendered channel-aware, as the reconstruction loss is a function of the entire communication chain, thereby ensuring that the learned semantic representations are inherently resilient to distortions introduced by the channel. Symmetrically, the outer-level optimization for channel robustness is semantics-aware, as it is conditioned on the semantic representations derived from the inner loop, ensuring the channel coding is specifically tailored to protect the features deemed most meaningful. This structured methodology renders the complex problem of joint robustness computationally tractable, enhances model generalization by decoupling the primary robustness objectives , and affords practical control over the system's behavior via the independent radii $\rho$ and $\mu$.
}

However, this robustness comes with a tradeoff: optimizing for worst-case scenarios may lead to a more conservative model, potentially sacrificing performance under average or benign conditions. In the context of wireless SemCom, this trade-off is often acceptable since the cost of semantic distortion or transmission failure in rare but adverse conditions can be significantly more detrimental than minor losses in optimal scenarios. To manage this balance, \OurAlg includes a tunable parameter via the radii $\rho$ and $\mu$ of the Wasserstein balls, which serve as regularization parameters controlling the level of robustness. Smaller radii yield solutions closer to standard ERM, favoring average-case performance, while larger radii emphasize robustness to distributional shifts. This formulation enhances resilience to distributional variability and heterogeneity in both semantic inputs and channel conditions, without relying on modality-specific assumptions or post hoc correction mechanisms.

\subsection{\OurAlg: Algorithm Design}

One of the key advantages of WDRO lies in its favorable theoretical and computational properties. In particular, WDRO enjoys strong duality under mild conditions~\cite{kuhn_wasserstein_2019}, which allows the original min-max problem -- defined over an infinite set of probability distributions -- to be reformulated as a finite-dimensional dual problem. This reformulation enables efficient optimization using advanced gradient-based techniques while preserving robustness guarantees.

\subsubsection{Dual Formulation}

To leverage these properties in our bi-level framework, we adopt a dual reformulation approach derived from optimal transport theory~\cite{santambrogio_optimal_2015}, which transforms the original constrained WDRO problem into a more tractable saddle-point optimization problem. In particular, strong duality allows the primal WDRO problem, which involves a supremum over an infinite set of distributions within a Wasserstein ball, to be equivalently expressed as a minimization over a scalar dual variable. 

Concretely, we consider a robust objective defined as:
\[
\sup_{Q \in \mathcal{B}_p(P, \rho)} \mathbb{E}_{Q}[\ell(\cdot)],
\]
where \(\ell(\cdot)\) represents the loss function evaluated under the distribution \(Q\), and \(\rho\) represents the Wasserstein radius, which determines the size of the ambiguity set. 

Under assumptions such as Lipschitz continuity of $\ell(\cdot)$ and compactness of the input space, this problem admits the following dual representation based on Kantorovich duality~\cite{sinha_certifying_2020, gao_distributionally_2016}.
Specially, it can be transformed into a dual form by introducing a Lagrange multiplier \(\lambda \in \mathbb{R}_{+}\) to enforce the Wasserstein constraint \(W_p(P, Q) \leq \rho\), leading to a tractable penalized formulation~\cite{sinha_certifying_2020}:
\begin{align}
\sup_{Q \in \mathcal{B}_p(P, \rho)} &\mathbb{E}_{Q}[\ell(\cdot)]= \nonumber \\ &\inf_{\lambda \geq 0} \left\{ \lambda \rho + \mathbb{E}_{x \sim P} \left[ \sup\nolimits_{\xi} \big( \ell(\xi) - \lambda c(\xi, x) \big) \right] \right\}.
\label{E:duality_Wass}
\end{align}

Here, \(\lambda \) is the dual variable that governs the tradeoff between robustness and fidelity to empirical data. \(x \sim P\) are samples from the empirical distribution. \(\xi\) represents an adversarial perturbation in the input space (e.g., a perturbed latent or semantic representation). \(c(\xi, x)\) is the transportation cost function, which quantifies how much the perturbed sample \(\xi\) deviates from the original input \(x\). This is typically instantiated as the squared Euclidean distance: \(c(\xi, x) = \|\xi - x\|^2\).

Even though this is not a classical Lagrangian dual, it is derived from Kantorovich duality in optimal transport~\cite{kanto_duality} and, under mild conditions, enjoys strong duality—yielding an equivalent and tractable reformulation of the original problem in practice. This formulation has several desirable properties. First, it replaces the intractable optimization over probability measures with a scalar optimization over \(\lambda\), and a point-wise supremum over the perturbation variable \(\xi\). Second, it makes the distributional robustness interpretable: the model is trained to minimize the worst-case expected loss under all distributions within a Wasserstein ball of radius \(\rho\). Finally, the structure of this formulation is favorable to stochastic gradient methods, which enables scalable training even in high-dimensional, deep learning-based architectures.

\textbf{Dual Form for Bi-level Problem:} Based on \Cref{E:duality_Wass}, we derive the dual formulations for both the inner (semantic-level) and outer (channel-level) optimization problems. The inner-level problem focuses on mitigating semantic noise, which captures the inherent variability and uncertainty in how the input \(x\) is semantically encoded. The learning goal is to find the semantic encoder \(f_\theta\) and decoder \(g_\phi\) that minimize the worst-case reconstruction loss \(\ell_{\text{s}}(x, \hat{x})\). To model perturbations in the input space, we denote \(\tilde{x}\) a semantically perturbed version of the input data \(x\), such that the perturbation lies within a Wasserstein ball centered at \(x\).  For example, setting \(\tilde{x} \sim \mathcal{N}(x, \sigma^2 I)\) captures stochastic perturbations arising from natural noise sources such as sensor errors, context ambiguity, or paraphrasing. Alternatively, \(\tilde{x} \) can represent an adversarial sample created by adversarial attacks such as FGSM~\cite{goodfellow_explaining_2015} or PGD~\cite{madry_towards_2019}. Using the Wasserstein duality \eqref{E:duality_Wass}, we have an equivalent problem to the inner problem \Cref{eq:inner} as follows:
\begin{align} 
&\text{\textsc{inner-dual:}} \quad  \min_{\theta, \phi} D(\theta, \phi \vert \psi, \omega), \text{ where } D(\theta, \phi \vert \psi, \omega) \defeq \nonumber\\
&\min_{\lambda \geq 0} \lambda \rho + \mathbb{E}_{{x}\sim\widehat{P}_n} \Bigg[ \sup_{\substack{\tilde{x}: z = h \, c_\psi(f_\theta(\tilde{x})) + w\\ \hat{\tilde{x}} = g_\phi(d_\omega(z))}} \{\ell_{\text{s}}(\tilde{x}, \hat{\tilde{x}}) - \lambda c( x , \tilde{x})\} \Bigl\vert \psi, \omega \Bigg].
\label{eq:inner_dual}
\end{align}

Here, \(c(\tilde{x}, x)\) is the cost function measuring the deviation between the perturbed and original inputs. The dual variable \(\lambda\) controls the balance between robustness to semantic perturbations and fidelity to the observed training data.

Similarly, at the outer level, we aim to learn channel encoding and decoding parameters \( \psi \) and \( \omega \) that minimize the worst-case expected channel loss \( \ell_{\text{c}}(s, \hat{s}) \) over all perturbations \( \tilde{z} \) of the received signal \( z \), subject to a transportation cost constraint. The perturbed signal \( \tilde{z} \) leads to a potentially different reconstructed semantic representation \( \tilde{s} = d_{\omega}(\tilde{z}) \). This accounts not only for physical noise but also for worst-case variations in the channel output. The  objective can be expressed in its dual form as follows:
\begin{align}
&\text{\textsc{outer-dual:}} \quad  \min_{\psi, \omega} D (\psi, \omega \vert \theta^*), \text{ where } D(\psi, \omega \vert \theta^*) \defeq \nonumber \\
&\min_{\gamma \geq 0} \gamma \mu +  \mathbb{E}_{z \sim \widehat{Z}_n} \Bigg[ \sup_{\substack{\tilde{z}: \hat{s} = d_\omega(\tilde{z}), \\ s = f_{\theta^*}(x),\,  x \sim \widehat{P}_n}} \left\{ \ell_{\text{c}}(s, \hat{s}) - \gamma \, c(z, \tilde{z}) \right\} \Bigl\vert \theta^* \Bigg]. 
\label{eq:outer_dual}
\end{align}

Here, \(c(\tilde{z}, z)\) measures the cost of perturbing the transmitted signal. The dual variable \(\gamma\) balances robustness against channel-level uncertainty and adherence to the nominal distribution. By introducing a dual variable associated with the Wasserstein constraint, we effectively decouple the adversarial distributional shift from the primary objective to make the problem analytically and computationally tractable.

\subsubsection{Smooth approximation with Log-sum-exp function}

Although the Wasserstein-based dual forms offer a tractable approach to robust optimization, the hard supremum term \( \mathbb{E}_{x \sim P}\sup_{\xi} \big( \ell(\xi) - \lambda c(\xi, x) \big) \) in \eqref{E:duality_Wass} (and thus in (\textsc{inner-dual}~\Cref{eq:inner_dual} and \textsc{outer-dual}~\Cref{eq:outer_dual}) remains non-smooth and costly to compute, particularly in deep, high-dimensional settings. To overcome this challenge, we replace the inner maximization with a smooth log-sum-exp approximation over a perturbation distribution~\cite{azizian2023exact}:
\begin{equation}
\begin{aligned}
 \epsilon \, \mathbb{E}_{x \sim P}  \log \mathbb{E}_{\xi \sim \tilde{P}(x)}\left[ \exp \left( \frac{ \ell(\xi) - \lambda c(\xi, x)}{ \epsilon } \right) \right]
\end{aligned}
\label{eq:inner_entropic}
\end{equation}
where \(\tilde{P}(x)\) represents a distribution over perturbations of \(x\). This perturbation distribution can be instantiated in multiple ways depending on the robustness modality. Smaller values of \(\epsilon\) result in a tighter approximation to the hard supremum (closer to the original dual), while larger values lead to a smoother landscape that enhances gradient-based learning. The log-sum-exp smoothing transforms the original non-smooth objective into a differentiable form, facilitating efficient stochastic optimization. Importantly, this approximation admits a provable upper bound on the original supremum, with the gap controlled explicitly by $\epsilon$, ensuring robustness is not arbitrarily sacrificed. It also enables scalable training by improving gradient flow in high-dimensional settings~\cite{azizian2023exact}. 

By integrating these techniques, \OurAlg addresses both the theoretical complexities inherent in WDRO and the practical challenges in implementing these models in real-world wireless SemCom systems.

\subsubsection{Bi-Level Optimization Procedure}

\begin{algorithm}[t]
\caption{\blue{Training Algorithm for \OurAlg}}
\label{alg:wasecom}
\begin{algorithmic}[1]
\REQUIRE Training data $\{x_i\}_{i=1}^n \sim \widehat{P}_n$, channel model $h$, noise model $n$; 
Wasserstein radii $\rho$, $\mu$; 
no. of training steps $T$
\STATE Initialize model parameters $\theta^{(0)}$, $\phi^{(0)}$, $\psi^{(0)}$, $\omega^{(0)}$ and dual variables $\lambda \geq 0$, $\gamma \geq 0$, $\epsilon \geq 0$
\FOR{$t = 1$ to $T$}
    \STATE Sample minibatch $\{x_i\}$ from $\widehat{P}_n$ \label{alg:sampling}

    \FOR{\textsc{outer loop}} \label{alg:outer_start}
        \STATE Compute semantic representation $s_i \gets f_{\theta^{(t-1)}}(x_i)$ \label{alg:outer_encode_sem}
        \STATE Transmit through channel: $z_i \gets h(c_{\psi^{(t-1)}}(s_i)) + w$ \label{alg:outer_signal}
        \STATE Generate perturbed signals $\tilde{z}_i \in \mathcal{B}_p(z_i, \mu)$
        \label{alg:outer_pertubed_signal}
        \STATE Decode received signal: $\hat{s}_i \gets d_{\omega^{(t-1)}}(\tilde{z}_i)$
        \label{alg:outer_decode_sem}
         \STATE Update $\psi^{(t)} \text{ and } \omega^{(t)}$ by solving problem~\Cref{eq:outer_dual} using gradient-based methods. \label{alg:outer_opt}
    
    \FOR{\textsc{inner loop}} \label{alg:inner_start}
        \STATE Generate semantic perturbation $\tilde{x}_i \in \mathcal{B}_p(x_i, \rho)$ \label{alg:inner_input}
        \STATE Compute encoded semantic $\tilde{s}_i \gets f_{\theta^{(t-1)}}(\tilde{x}_i)$ \label{alg:inner_encode_sem}
        \STATE Compute channel output $z'_i \gets h(c_{\psi^{(t)}}(\tilde{s}_i)) + w$
        \label{alg:inner_signal}
        \STATE Decode and reconstruct: $\hat{\tilde{x}}_i \gets g_{\phi^{(t-1)}}(d_{\omega^{(t)}}(z'_i))$ \label{alg:inner_decode_sem}
        \STATE Update $\theta^{(t)} \text{ and }\phi^{(t)}$ by solving problem~\Cref{eq:inner_dual} using gradient-based methods.
        \label{alg:inner_opt}
    \ENDFOR \label{alg:inner_end}
    \ENDFOR  \label{alg:outer_stop}

\ENDFOR
\end{algorithmic}
\end{algorithm}

\blue{To implement the proposed bi-level WDRO framework, we design an iterative training algorithm that alternates between solving the \textit{inner semantic-level} and \textit{outer channel-level} dual problems using gradient-based updates in~\Cref{alg:wasecom}. This training procedure is designed to instill resilience into the final, fixed model, preparing it for operational deployment. The algorithm employs an alternating optimization scheme, a standard approach for bilevel problems, where two sets of parameters are refined in an alternating fashion within each training iteration~\cite{bilevel_alternative}. The goal is to progressively refine the encoder and decoder parameters to improve robustness against both semantic uncertainty (e.g., paraphrasing or sensory noise) and channel-induced distortions (e.g., transmission noise or channel fading).}


In each training iteration, a mini-batch of samples $\{x_i\}$ is drawn from the empirical distribution $\widehat{P}_n$ (line~\ref{alg:sampling}). The outer loop (lines~\ref{alg:outer_start}--\ref{alg:outer_stop}) aims to enhance robustness against channel-level noise and shifts, leveraging the optimized semantic encoder from the inner level. For each $x_i$, the semantic encoder $f_{\theta^{(t-1)}}$, using parameters from the previous iteration, computes the latent representation $s_i$ (line~\ref{alg:outer_encode_sem}). The channel encoder $c_{\psi^{(t-1)}}$ maps this to a signal $u_i$, which is perturbed by the channel and noise to produce $z_i$ (line~\ref{alg:outer_signal}). 
The perturbation $\tilde{z}_i \in \mathcal{B}_p(z_i, \mu)$ simulates the worst-case channel effect within a Wasserstein ball of radius $\mu$ (line~\ref{alg:outer_pertubed_signal}). 
The channel decoder $d_{\omega^{(t-1)}}$ reconstructs the semantic signal $\hat{s}_i$ (line~\ref{alg:outer_decode_sem}). $\psi^{(t)}$ and $\omega^{(t)}$ are updated by solving the outer-level dual problem in Eq.~\Cref{eq:outer_dual}, minimizing the worst-case channel distortion (line~\ref{alg:outer_opt}). 

Following this, the inner loop (lines~\ref{alg:inner_start}--\ref{alg:inner_end}) addresses robustness to semantic perturbations by solving the semantic-level dual problem (cf. \Cref{eq:inner_dual} or its entropic variant in \Cref{eq:inner_entropic}). For each input $x_i$, a perturbed version $\tilde{x}_i \in \mathcal{B}_p(x_i, \rho)$ is generated within a Wasserstein ball of radius $\rho$ (line~\ref{alg:inner_input}). This perturbation simulates semantic ambiguity arising from context shifts or adversarial modifications. The perturbed sample $\tilde{x}_i$ is encoded via the semantic encoder $f_{\theta^{(t-1)}}$ (line~\ref{alg:inner_encode_sem}, transmitted through the current channel encoder $c_{\psi^{(t)}}$, and passed through the channel $h$ with noise $n$ to produce the signal $z_i$ (line~\ref{alg:inner_signal}). The received signal is decoded and reconstructed via the channel decoder $d_{\omega^{(t)}}$ and semantic decoder $g_{\phi^{(t-1)}}$ to obtain $\hat{\tilde{x}}_i$ (line~\ref{alg:inner_decode_sem}). $\theta^{(t)}$ and $\phi^{(t)}$ are then updated to minimize the worst-case semantic loss under this perturbation, by solving the semantic dual formulation (line~\ref{alg:inner_opt}).


\blue{\textbf{Model Deployment and Inference:} Upon completion of the training phase detailed in~\Cref{alg:wasecom}, the optimized model parameters ($\theta, \phi, \psi, \omega$) are utilized for practical deployment, where the system operates as a standard, feed-forward semantic communication pipeline. The process at the transmitter begins with the robust semantic encoder ($f_\theta$) processing a source data sample $x$ to extract a compact and meaningful representation, which is then prepared for transmission by the channel encoder ($c_\psi$). Following propagation over the physical wireless channel, the receiver employs the channel decoder ($d_\omega$) to recover the semantic representation from the incoming signal. Subsequently, the semantic decoder ($g_\phi$) uses this recovered representation to produce the final reconstruction $\hat{x}$. 
}

\section{\OurAlg: Theoretical Analysis} \label{sec:generalization}

In this section, we establish the generalization and robustness guarantees of \OurAlg by deriving uniform convergence bounds for both levels of the proposed bi-level framework. The goal is to establish formal guarantees that the learned semantic and channel models perform reliably not only on the training distribution but also under adversarial and out-of-distribution shifts captured by Wasserstein balls around the empirical distributions. Specifically, we analyze the excess risk at the inner semantic-level and the outer channel-level, leveraging the dual formulations of WDRO.

\subsection{Preliminaries and Assumptions}

We begin by formalizing the notation and assumptions used in the subsequent analysis. For better presentation, let us denote \(\vartheta = (\theta, \phi)\) correspond to the semantic encoder \(f_\theta\) and decoder \(g_\phi\), and the associated semantic reconstruction loss is denoted as \(\ell_{\text{s}}(x; \vartheta)\). Also let \(\varphi = (\psi, \omega)\) parameterize the channel encoder \(c_\psi\) and decoder \(d_\omega\), with a corresponding channel distortion loss \(\ell_{\text{c}}(s, z; \varphi)\), where \(s = f_\theta(x)\) is the semantic representation and \(z\) is the received signal corrupted by the channel.



We adopt the following assumptions, standard in the literature on distributionally robust optimization~\cite{sinha_certifying_2020, kuhn_wasserstein_2019}.

\begin{assumption}[Convexity of Transportation Cost] \label{ass:cost}
The transportation cost $c: \mathcal{X} \times \mathcal{X} \to \mathbb{R}_+$ is continuous, and for all $x_0 \in \mathcal{X}$, the function $c(\cdot, x_0)$ is $1$-strongly convex. A typical instantiation is the squared Euclidean cost: $c(x, x') = \|x - x'\|^2$.
\end{assumption}

\begin{assumption}[Lipschitz Continuity of Losses] \label{ass:lipschitz}
The semantic and channel losses are Lipschitz continuous with respect to their respective input parameters:
\begin{align*}
\text{(a)} \quad & |\ell_{\text{s}}(x_i; \vartheta) - \ell_{\text{s}}(x_j; \vartheta)| \leq L_{\text{s}} \|x_i - x_j\|, \\
\text{(b)} \quad & |\ell_{\text{c}}(s, z_i; \varphi) - \ell_{\text{c}}(s, z_j; \varphi)| \leq L_{\text{c}} \|z_i - z_j\|.
\end{align*}
\end{assumption}

\begin{assumption}[Smoothness of Loss Functions] \label{ass:smooth}
The loss functions $\ell_{\text{s}}$ and $\ell_{\text{c}}$ are smooth, i.e., they have Lipschitz continuous gradients with respect to their respective inputs.
\end{assumption}

\begin{remark}[On Assumptions]
These assumptions ensure the tractability and stability of our optimization framework. The strongly convex transport cost (Assumption~\ref{ass:cost}) enables dual reformulation of WDRO, while Lipschitz continuity and smoothness of the losses (Assumptions~\ref{ass:lipschitz}, \ref{ass:smooth}) support convergence of gradient-based methods. These conditions are typically satisfied in SemCom models using standard neural architectures and common loss functions.
\end{remark}

\subsection{Robust Surrogate Risk and Excess Risk for Bi-level WDRO}
\label{sec:surrogate_excess_bilevel}

Let $\mathcal{F}_s = {\ell_s(\cdot;\vartheta): \vartheta\in\Theta}$ and $\mathcal{F}_c = {\ell_c(s,\cdot;\varphi): \varphi\in\Phi}$ be the sets of loss functions realized by the semantic and channel models (with $s$ treated as a contextual parameter for $\mathcal{F}_c$). 

To support the generalization analysis, we define the sets of loss functions induced by these models. Let \(\mathcal{L}_{\text{s}} :=   \{x \mapsto \ell_{\text{s}}(x; \vartheta): \vartheta\in\Theta\)\} denote the class of semantic loss functions as \(\vartheta\) varies over the parameter space \(\Theta\). Similarly, let \(\mathcal{L}_{\text{c}}:= \{z \mapsto \ell_{\text{c}}(s, z; \varphi):\varphi\in\Phi\}\) denote the class of channel loss functions parameterized by \(\varphi \in \Phi\) with semantic input \(s\) held fixed.

Let \(h_{\text{s}} \in \mathcal{L}_{\text{s}}\) and \(h_{\text{c}} \in \mathcal{L}_{\text{c}}\). We define the following dual surrogate objectives based on the dual formulations of WDRO:
\begin{align}
S_\lambda(x; h_{\text{s}}) := \sup_{\tilde{x} \in \mathcal{X}} \left\{ h_{\text{s}}(\tilde{x}) - \lambda c(\tilde{x}, x) \right\}, \label{eq:semantic_surrogate} \\
C_\gamma(z; h_{\text{c}}) := \sup_{\tilde{z} \in \mathcal{Z}} \left\{ h_{\text{c}}(\tilde{z}) - \gamma c(\tilde{z}, z) \right\}. \label{eq:channel_surrogate}
\end{align}

\begin{definition}[Expected Risks and Surrogate Risks]
Let $P$ be the distribution over inputs \(x \in \mathcal{X}\), and \(Z\) be the distribution over channel outputs \(z \in \mathcal{Z}\). Let $s = f_\theta(x)$ be the semantic representation. Then, the \emph{expected risks} are:
\[
  \mathscr{L}(P, h_{\text{s}}) := \mathbb{E}_{x \sim P} [h_{\text{s}}(x)], \quad
  \mathscr{L}(Z, h_{\text{c}}) := \mathbb{E}_{z \sim Z} [h_{\text{c}}(z)]
\]
The corresponding \emph{robust surrogate risks} are defined as:
\[
\mathscr{L}_\rho^\lambda(P, h_{\text{s}}) := \mathbb{E}_{x \sim P}[S_\lambda(x; h_{\text{s}})] + \lambda \rho^2
\]
\[
\mathscr{L}_\mu^\gamma(Z, h_{\text{c}}) := \mathbb{E}_{z \sim Z}[C_\gamma(z; h_{\text{c}})] + \gamma \mu^2
\]
\end{definition}

\begin{definition}[Excess Risks and Robust Excess Risks]
The excess risks are:
\begin{align}
\mathscr{E}(P, h_{\text{s}}) := \mathscr{L}(P, h_{\text{s}}) - \inf_{h' \in \mathcal{L}_{\text{s}}} \mathscr{L}(P, h'), \\
\mathscr{E}(Z, h_{\text{c}}) := \mathscr{L}(Z, h_{\text{c}}) - \inf_{h' \in \mathcal{L}_{\text{c}}} \mathscr{L}(Z, h')
\end{align}

The corresponding robust excess risks are defined as:
\begin{align}
\mathscr{E}_\rho^\lambda(P, h_{\text{s}}) := \mathscr{L}_\rho^\lambda(P, h_{\text{s}}) - \inf_{h' \in \mathcal{L}_{\text{s}}} \mathscr{L}_\rho^\lambda(P, h'),\\
\mathscr{E}_\mu^\gamma(Z, h_{\text{c}}) := \mathscr{L}_\mu^\gamma(Z, h_{\text{c}}) - \inf_{h' \in \mathcal{L}_{\text{c}}} \mathscr{L}_\mu^\gamma(Z, h')
\end{align}
\end{definition}



We now establish that the surrogate excess risks, defined via the dual formulation of WRDO, can serve as accurate approximations to the worst-case excess risks over Wasserstein balls at both levels of our bilevel optimization problem.

\begin{lemma}[Bi-level Surrogate Excess Risk Bounds]
\label{lem:excess_risk}
Suppose $f \in \mathcal{F}_{\text{s}}$ is $L_{\text{s}}$-Lipschitz and $g \in \mathcal{F}_{\text{c}}$ is $L_{\text{c}}$-Lipschitz. If $\lambda \ge L_{\text{s}}/\rho$ and $\gamma \ge L_{\text{c}}/\mu$, then for any $P' \in \mathcal{B}(P, \rho)$ and $Z' \in \mathcal{B}(Z, \mu)$:
\begin{align}
|\mathscr{E}(P', h_{\text{s}}) - \mathscr{E}_\rho^\lambda(P, h_{\text{s}})| \le 2L_{\text{s}} \rho + |\lambda - \lambda^*| \rho^2,\\
|\mathscr{E}(Z', h_{\text{c}}) - \mathscr{E}_\mu^\gamma(Z, h_{\text{c}})| \le 2L_{\text{c}} \mu + |\gamma - \gamma^*| \mu^2,
\end{align}
where $\lambda^*$ and $\gamma^*$ are the optimal dual variables corresponding to the inner and outer WDRO problems, respectively.
\end{lemma}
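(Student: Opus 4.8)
The plan is to prove both inequalities by a single argument, since the channel-level bound follows from the semantic-level one under the substitution $(P,\rho,L_{\text{s}},\lambda,h_{\text{s}},\mathcal{L}_{\text{s}})\mapsto(Z,\mu,L_{\text{c}},\gamma,h_{\text{c}},\mathcal{L}_{\text{c}})$, so I focus on the inner (semantic) bound. The central device is to insert the \emph{nominal} excess risk $\mathscr{E}(P,h_{\text{s}})$ as an intermediate quantity and split via the triangle inequality $|\mathscr{E}(P',h_{\text{s}})-\mathscr{E}_\rho^\lambda(P,h_{\text{s}})|\le|\mathscr{E}(P',h_{\text{s}})-\mathscr{E}(P,h_{\text{s}})|+|\mathscr{E}(P,h_{\text{s}})-\mathscr{E}_\rho^\lambda(P,h_{\text{s}})|$, bounding the first difference by the Wasserstein displacement of $P'$ and the second by the duality gap of the surrogate.

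For the first term I would use that every $h'\in\mathcal{L}_{\text{s}}$ is $L_{\text{s}}$-Lipschitz (Assumption~\ref{ass:lipschitz}), so by Kantorovich--Rubinstein duality together with $W_1\le W_2\le\rho$ we get $|\mathbb{E}_{P'}[h']-\mathbb{E}_{P}[h']|\le L_{\text{s}}W_1(P,P')\le L_{\text{s}}\rho$ for all such $h'$ and all $P'\in\mathcal{B}(P,\rho)$. Applying this to $h_{\text{s}}$ in the loss term, and uniformly over $h'$ in the infimum term (via $|\inf_{h'}\mathbb{E}_{P'}[h']-\inf_{h'}\mathbb{E}_{P}[h']|\le\sup_{h'}|\mathbb{E}_{P'}[h']-\mathbb{E}_{P}[h']|$), yields $|\mathscr{E}(P',h_{\text{s}})-\mathscr{E}(P,h_{\text{s}})|\le 2L_{\text{s}}\rho$.

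For the second term the key simplification is that the penalty $\lambda\rho^2$ in $\mathscr{L}_\rho^\lambda$ is identical for every hypothesis and therefore cancels between the two infima, so $\mathscr{E}_\rho^\lambda(P,h_{\text{s}})=\mathbb{E}_{P}[S_\lambda(\cdot;h_{\text{s}})]-\inf_{h'}\mathbb{E}_{P}[S_\lambda(\cdot;h')]$. Using $1$-strong convexity of the transport cost (Assumption~\ref{ass:cost}) and $L_{\text{s}}$-Lipschitzness, the pointwise surrogate gap obeys $0\le S_\lambda(x;h')-h'(x)\le L_{\text{s}}^2/(4\lambda)$ for every $h'$; since both the $h_{\text{s}}$-gap and the infimum-gap then lie in $[0,L_{\text{s}}^2/(4\lambda)]$, their difference is at most $L_{\text{s}}^2/(4\lambda)$, giving $|\mathscr{E}(P,h_{\text{s}})-\mathscr{E}_\rho^\lambda(P,h_{\text{s}})|\le L_{\text{s}}^2/(4\lambda)$.

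It remains to absorb the surrogate gap into the stated $|\lambda-\lambda^*|\rho^2$. I would first characterize $\lambda^*$ through strong duality: at the optimal dual variable the Wasserstein constraint is active, so the optimal transport cost equals $\rho^2$, while the first-order condition together with smoothness (Assumption~\ref{ass:smooth}) bounds the optimal displacement by $L_{\text{s}}/(2\lambda^*)$, whence $\rho^2\le L_{\text{s}}^2/(4\lambda^{*2})$ and $\lambda^*\le L_{\text{s}}/(2\rho)$. A short monotonicity argument (the left side decreasing and the right side increasing in $\lambda$, with the inequality verified at the boundary) then gives $L_{\text{s}}^2/(4\lambda)\le(\lambda-\lambda^*)\rho^2=|\lambda-\lambda^*|\rho^2$ for all $\lambda\ge L_{\text{s}}/\rho\ge\lambda^*$, completing the proof. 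I expect the last step to be the main obstacle: one must reconcile the single dual variable $\lambda$ appearing in the infimum over the whole class $\mathcal{L}_{\text{s}}$ with the fact that $\lambda^*$ is defined relative to $h_{\text{s}}$, and secure the sharp constant $\lambda^*\le L_{\text{s}}/(2\rho)$ rather than the weaker $L_{\text{s}}/\rho$—the factor of two is exactly what makes the final inequality hold at the critical radius $\lambda=L_{\text{s}}/\rho$.
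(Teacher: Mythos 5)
Your first step (bounding $|\mathscr{E}(P',h_{\text{s}})-\mathscr{E}(P,h_{\text{s}})|\le 2L_{\text{s}}\rho$ via Kantorovich--Rubinstein and $W_1\le W_2$) is sound and mirrors the paper's Proposition in the appendix, but your overall decomposition through the \emph{nominal} excess risk $\mathscr{E}(P,h_{\text{s}})$ cannot close, and the failure is exactly where you predicted. Your final step needs $L_{\text{s}}^2/(4\lambda)\le|\lambda-\lambda^*|\rho^2$ for all $\lambda\ge L_{\text{s}}/\rho$, which forces the sharp claim $\lambda^*\le L_{\text{s}}/(2\rho)$. That claim is not available: the first-order condition gives $\rho^2=\mathbb{E}[c(\xi^*(\lambda^*),x)]$ where $\xi^*(\lambda)$ maximizes $h(\xi)-\lambda c(\xi,x)$, and the displacement of \emph{that} maximizer is only bounded by the zero-crossing of the envelope $L_{\text{s}}t-\lambda t^2$, i.e.\ $\|\xi^*-x\|\le L_{\text{s}}/\lambda$, not by the envelope's argmax $L_{\text{s}}/(2\lambda)$ (the true maximizer of the inner supremum need not sit where the quadratic upper bound peaks). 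This yields only $\lambda^*\le L_{\text{s}}/\rho$ --- which is precisely the bound the paper imports from Lee and Raginsky --- and with that weaker bound one can have $\lambda^*=\lambda=L_{\text{s}}/\rho$, making your right-hand side $|\lambda-\lambda^*|\rho^2=0$ while the left-hand side $L_{\text{s}}^2/(4\lambda)=L_{\text{s}}\rho/4$ stays strictly positive. More structurally, your route asserts $\mathscr{E}(P,h_{\text{s}})=\mathscr{E}_\rho^{\lambda^*}(P,h_{\text{s}})$ in the limit $\lambda\to\lambda^*$, i.e.\ that the worst-case excess risk over the ball equals the nominal excess risk at its center, which is false in general; the $2L_{\text{s}}\rho$ budget has already been spent on the $P'\to P$ shift and nothing is left to absorb the genuine surrogate gap at $P$.

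The paper avoids this by never passing through $P$: it sandwiches the two quantities directly. One direction is \emph{exact} thanks to strong duality --- Fact~1 gives $\mathscr{L}(Q,f)\le\sup_{P'\in\mathcal{B}(P,\rho)}\mathscr{L}(P',f)=\inf_{\gamma'\ge0}\{\gamma'\rho^2+\mathbb{E}_P[\phi_{\gamma'}]\}\le\mathscr{L}_\rho^{\gamma}(P,f)$ with no error term --- and the other direction (Fact~2) pays $2L_{\text{s}}\rho$ \emph{once} for the full diameter between the worst-case distribution $P^*$ and the given $Q$ (both inside the ball, hence $W_2(P^*,Q)\le 2\rho$ by the triangle inequality), plus the dual-suboptimality term $|\lambda-\lambda^*|\rho^2$ measuring how far the chosen multiplier is from the one achieving the infimum. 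If you want to salvage your argument, you should replace the intermediate quantity $\mathscr{E}(P,h_{\text{s}})$ by the worst-case excess risk $\sup_{P'}\mathscr{L}(P',h_{\text{s}})-\inf_{h'}\sup_{P'}\mathscr{L}(P',h')$ and invoke strong duality to identify it with $\mathscr{E}_\rho^{\lambda^*}(P,h_{\text{s}})$; the remaining gap is then controlled by the dual variable mismatch rather than by $L_{\text{s}}^2/(4\lambda)$.
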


We provide the proof of \Cref{lem:excess_risk} in Appendix~\ref{proof:excess_risk}. 

\begin{remark}
	\Cref{lem:excess_risk} demonstrates that the robust excess risks defined via the dual (penalized) objectives are tightly coupled to the actual worst-case risks over the Wasserstein ambiguity sets. The approximation gap consists of two interpretable terms. The first term, \(2L_{\text{s}}\rho\) (or \(2L_{\text{c}}\mu\) for the channel), quantifies the inherent cost of robustness under distributional shift. In wireless SemCom, this reflects the system’s tolerance to semantic or channel-level perturbations, such as adversarial inputs, sensor noise, or fading. Its linear dependence on the Lipschitz constant arises from the bounded variation of the loss under small perturbations, ensuring that the surrogate and worst-case risks are tightly coupled when \(\rho\) or \(\mu\) is small.
    The second term, \(|\lambda - \lambda^*| \rho^2\) (and analogously for $\gamma$), captures the penalty from suboptimal tuning of the dual regularization parameters. Overly conservative or insufficiently protective values can lead to either unnecessary resource usage or degraded robustness. This motivates practical tuning of \(\lambda\) (or \(\gamma\)) based on reliability or task-specific constraints.
    Together, these bounds justify the use of dual surrogate risks in \OurAlg as accurate and efficient proxies for true worst-case performance. They apply to both semantic and channel levels, supporting robust end-to-end communication in uncertain environments. Note that $\mathscr{L}_{\rho}^{\lambda^*} (P,h_s)$ and $\mathscr{L}_{\mu}^{\gamma^*} (Z,h_c)$ are the same as $\gB(P,\rho)$ and $\gB(Z, \rho)$-worst-case risk thanks to the strong duality in \cref{E:duality_Wass}, obtained with $\rho, \mu > 0$.

\end{remark}

We now present a generalization result for the bilevel WDRO framework, which demonstrates that minimizing the empirical surrogate risks at both the semantic and channel levels yields models that generalize well to the population setting under distributional shifts captured by Wasserstein balls.

\begin{theorem}[Robust generalization bounds] \label{thrm:excess_risk} 
Let \(\widehat{h}_{\text{s}} \in \mathcal{L}_{\text{s}}\) and \(\widehat{h}_{\text{c}} \in \mathcal{L}_{\text{c}}\) be \(\varepsilon\)-optimal solutions to the empirical surrogate risk minimization problems for the inner (semantic) and outer (channel) levels, respectively. Consider that \Cref{ass:cost}--\ref{ass:smooth} hold, and the losses are uniformly bounded by \(M\), i.e., \(|\ell_{\text{s}}(x; \vartheta)| \le M\) and \(|\ell_{\text{c}}(s, z; \varphi)| \le M\) for all inputs. Then, with probability at least \(1 - \delta\) over the training sample of size \(n\), the worst-case excess risks are bounded as:
\begin{align*}
\mathscr{E}(Z, \widehat{h}_{\text{s}}) \le 
\frac{48 \mathscr{C}(\mathcal{L}_{\text{s}})}{\sqrt{n}} +
2M \sqrt{ \frac{2 \log(2 / \delta)}{n} } +
\varepsilon + g(\rho, \lambda)\\
\mathscr{E}(Z, \widehat{h}_{\text{c}}) \le 
\frac{48 \mathscr{C}(\mathcal{L}_{\text{c}})}{\sqrt{n}} +
2M \sqrt{ \frac{2 \log(2 / \delta)}{n} } +
\varepsilon + g(\mu, \gamma),
\end{align*}
for any \(P' \in \mathcal{B}_p(P, \rho)\) and \(Z' \in \mathcal{B}_p(Z, \mu)\), where the robustness penalties are defined as:
\[
g(\rho, \lambda) = 2L_{\text{s}} \rho + |\lambda - \lambda^*| \rho^2, \quad
g(\mu, \gamma) = 2L_{\text{c}} \mu + |\gamma - \gamma^*| \mu^2,
\]
and \(\mathscr{C}(\mathcal{L})\) denotes the complexity of the function class \(\mathcal{L}\), measured via the Dudley entropy integral:
\[
\mathscr{C}(\mathcal{L}) := \int_0^\infty \sqrt{ \log \mathcal{N}(\mathcal{L}, \|\cdot\|_\infty, \epsilon) } \, d\epsilon,
\]
where \(\mathcal{N}(\mathcal{L}, \|\cdot\|_\infty, \epsilon)\) is the covering number of \(\mathcal{L}\) with respect to the uniform norm. 
\end{theorem}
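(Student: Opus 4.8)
The plan is to reduce the worst-case excess risk to the dual robust surrogate excess risk, then control the latter by a standard uniform-convergence argument, with a single nonstandard step that transfers the complexity of the surrogate class back to the model class. I carry out the semantic level; the channel level is verbatim after the substitutions $(\mathcal{L}_{\text{s}}, S_\lambda, \rho, \lambda, L_{\text{s}}) \mapsto (\mathcal{L}_{\text{c}}, C_\gamma, \mu, \gamma, L_{\text{c}})$. First I would invoke \Cref{lem:excess_risk} to absorb the gap between the true worst-case excess risk over $\mathcal{B}_p(P,\rho)$ and the penalized robust excess risk, so that for every $P' \in \mathcal{B}_p(P,\rho)$,
\[
\mathscr{E}(P', \widehat{h}_{\text{s}}) \le \mathscr{E}_\rho^\lambda(P, \widehat{h}_{\text{s}}) + g(\rho, \lambda),
\]
and it remains to bound $\mathscr{E}_\rho^\lambda(P, \widehat{h}_{\text{s}})$. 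Writing $h^\star \in \arg\min_{h'} \mathscr{L}_\rho^\lambda(P, h')$ and inserting empirical terms, I split $\mathscr{E}_\rho^\lambda(P, \widehat{h}_{\text{s}})$ into two generalization gaps $\mathscr{L}_\rho^\lambda(P,\cdot) - \mathscr{L}_\rho^\lambda(\widehat{P}_n,\cdot)$ evaluated at $\widehat{h}_{\text{s}}$ and $h^\star$, plus an empirical suboptimality term bounded by $\varepsilon$ via the $\varepsilon$-optimality of $\widehat{h}_{\text{s}}$. Since the offset $\lambda\rho^2$ cancels between $P$ and $\widehat{P}_n$, both gaps reduce to deviations of empirical means of the surrogate $S_\lambda(\cdot;h)$, giving $\mathscr{E}_\rho^\lambda(P, \widehat{h}_{\text{s}}) \le 2\sup_{h}\bigl|\mathbb{E}_{P}[S_\lambda(\cdot;h)] - \mathbb{E}_{\widehat{P}_n}[S_\lambda(\cdot;h)]\bigr| + \varepsilon$.

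Next I would bound the uniform deviation in two pieces. Because the loss is bounded by $M$ and the transport cost is nonnegative with $c(x,x)=0$, the surrogate satisfies $S_\lambda(x;h) \in [-M, M]$ (the lower bound from taking $\tilde{x}=x$, the upper bound from dropping $-\lambda c \le 0$), so changing one sample moves the empirical supremum by at most $2M/n$. McDiarmid's bounded-differences inequality then concentrates the supremum around its mean, producing the term $2M\sqrt{2\log(2/\delta)/n}$ after accounting for the two-sided deviation and the surrogate range. For the mean, symmetrization bounds $\mathbb{E}\sup_h|\cdots|$ by $2\mathcal{R}_n(\mathcal{S}_\lambda)$, the Rademacher complexity of the surrogate class $\mathcal{S}_\lambda = \{x \mapsto S_\lambda(x;h): h\in\mathcal{L}_{\text{s}}\}$, which Dudley's entropy integral in turn bounds by $(12/\sqrt{n})\,\mathscr{C}(\mathcal{S}_\lambda)$. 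Tracking the constants—factor $2$ from the excess-risk decomposition, factor $2$ from symmetrization, and factor $12$ from Dudley—yields the claimed $48/\sqrt{n}$ coefficient.

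The crux is to replace $\mathscr{C}(\mathcal{S}_\lambda)$ by $\mathscr{C}(\mathcal{L}_{\text{s}})$, since the theorem is stated over the original loss class. The key fact I would establish is that $h \mapsto S_\lambda(\cdot;h)$ is $1$-Lipschitz in the uniform norm: applying $|\sup_{\tilde{x}} F - \sup_{\tilde{x}} G| \le \sup_{\tilde{x}}|F - G|$ with $F = h_1 - \lambda c(\cdot,x)$ and $G = h_2 - \lambda c(\cdot,x)$ gives $\|S_\lambda(\cdot;h_1) - S_\lambda(\cdot;h_2)\|_\infty \le \|h_1 - h_2\|_\infty$. Hence any $\epsilon$-cover of $\mathcal{L}_{\text{s}}$ induces an $\epsilon$-cover of $\mathcal{S}_\lambda$, so $\mathcal{N}(\mathcal{S}_\lambda, \|\cdot\|_\infty, \epsilon) \le \mathcal{N}(\mathcal{L}_{\text{s}}, \|\cdot\|_\infty, \epsilon)$ and therefore $\mathscr{C}(\mathcal{S}_\lambda) \le \mathscr{C}(\mathcal{L}_{\text{s}})$. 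Chaining the concentration, symmetrization, Dudley, and covering-transfer bounds, and adding back $g(\rho,\lambda)$ from \Cref{lem:excess_risk}, yields the stated semantic bound; the channel bound is identical.

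I expect this contraction/covering-transfer step to be the main obstacle, since it is precisely what ties the (a priori larger) surrogate complexity to the assumed complexity $\mathscr{C}(\mathcal{L}_{\text{s}})$ of the model class, and it requires that the inner supremum ranges over the full space $\mathcal{X}$ so that the uniform-norm comparison is legitimate. A secondary point of care is ensuring the $\varepsilon$-optimality is measured in the surrogate objective $\mathscr{L}_\rho^\lambda(\widehat{P}_n,\cdot)$ rather than the raw loss, and that the boundedness of $S_\lambda$ (needed for McDiarmid) genuinely follows from \Cref{ass:cost} together with the uniform bound $M$ on the losses.
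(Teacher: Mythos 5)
Your proposal is correct and follows essentially the same route as the paper's proof: reduce the worst-case excess risk to the dual surrogate excess risk via Lemma~\ref{lem:excess_risk}, decompose into two generalization gaps plus the $\varepsilon$-suboptimality term, and control the uniform deviation by symmetrization, McDiarmid, and Dudley's entropy integral, with the key contraction step $\abs{\sup F - \sup G} \le \sup\abs{F-G}$ used to transfer the covering number of the surrogate class back to the base class. The only (cosmetic) difference is that you apply this contraction directly in the uniform norm on $\mathcal{L}_{\text{s}}$, landing on $\mathscr{C}(\mathcal{L}_{\text{s}})$ exactly as stated in the theorem, whereas the paper routes the same argument through Lipschitzness in the parameter $\theta$ and a cover of $\Theta$.
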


The proof can be found in~\Cref{proof:thrm_excess_risk}. 
We sketch the  proof as follows: First, we apply uniform convergence bounds to control the deviation between the empirical surrogate risk and its population counterpart. This involves bounding the empirical Rademacher complexity of the function classes \(\mathcal{L}_{\text{s}}\) and \(\mathcal{L}_{\text{c}}\), which yields the generalization term involving \(\mathscr{C}(\mathcal{L}) / \sqrt{n}\). Second, we invoke the sandwich lemma for surrogate excess risk (previously stated as Lemma~\ref{lem:excess_risk}) to relate the true worst-case excess risk to the surrogate excess risk. This results in an additive penalty term \(g(\cdot, \cdot)\), which quantifies the approximation error due to robustness and the suboptimal choice of dual variables \(\lambda\) and \(\gamma\). As a concrete example, suppose \(\mathcal{L}_{\text{s}}\) corresponds to a class of linear predictors \(\{x \mapsto \langle \theta, x \rangle : \|\theta\|_2 \le C\}\). In this case, the covering number satisfies \(\mathcal{N}(\mathcal{L}_{\text{s}}, \|\cdot\|_\infty, \epsilon) = \left(1 + \frac{2C}{\epsilon}\right)^d\), and the Dudley integral yields a complexity bound \(\mathscr{C}(\mathcal{L}_{\text{s}}) \le \frac{3}{2} C L_{\theta} \sqrt{d}\), where \(L_\theta\) is the Lipschitz constant of the parameterization. This shows that \(\mathscr{C}(\mathcal{L})\) grows at a moderate rate in the dimension and hypothesis class size, making the bound practically meaningful~\cite{lee_minimax_2018}. 


\begin{remark}
\Cref{thrm:excess_risk} provides a generalization guarantee for \OurAlg under distributional uncertainty. It shows that minimizing the empirical surrogate risks at both semantic and channel levels yields models that generalize to unseen data and channel conditions, with convergence rate \(\mathcal{O}(1/\sqrt{n})\), matching classical learning theory. Importantly, the additional terms \(g(\rho, \lambda)\) and \(g(\mu, \gamma)\) quantify the robustness-performance tradeoff in a wireless SemCom context. In SemCom, where incorrect reconstruction of meaning can have a much higher cost than bit errors, these terms explicitly bound how much robustness to semantic distortion and channel degradation impacts performance. When \(\rho\) and \(\mu\) are small, the bounds recover standard ERM behavior, indicating strong performance under nominal conditions. As they increase, the model becomes more resilient to shifts, at the cost of possible conservatism. This interpolation is useful in wireless environments with unpredictable conditions, allowing system behavior to be tuned for specific needs. Thus, the theoretical bounds justify the design of \OurAlg and offer actionable guidance for its deployment in diverse wireless settings.
\end{remark}

\vspace{-15pt}
\blue{
\subsection{Convergence of \OurAlg}

The alternating structure in Algorithm~\ref{alg:wasecom} ensures that the learned SemCom system is jointly robust, which tolerates both semantic-level ambiguity and channel-level signal degradation. Due to the inherent non-convexity of AI models, global optimality cannot be theoretically guaranteed. This limitation applies broadly to modern adversarial training and DRO frameworks~\cite{Kawaguchi2019GradientDF, dinh2017}. However, the objective of our framework is to find a solution that corresponds to a robust stationary point, which represents a locally optimal solution that is stable against worst-case perturbations.

The convergence of \OurAlg to stationary solutions is established through three complementary theoretical principles. \textit{First}, by leveraging Kantorovich duality~\eqref{eq:inner_dual}\eqref{eq:outer_dual} and applying log-sum-exp smoothing~\eqref{eq:inner_entropic}, we transform the original intractable worst-case objective into a differentiable surrogate with Lipschitz-continuous gradients. Under~\Cref{ass:cost}--\ref{ass:smooth}, this reformulation yields a well-behaved optimization landscape amenable to gradient-based methods~\cite{esfahani_data-driven_2017,gao_distributionally_2016}. \textit{Second}, Algorithm~\ref{alg:wasecom} adopts an alternating gradient descent--ascent scheme, a well-established approach for non-convex min--max optimization. Under the smoothness conditions in Assumptions~\ref{ass:lipschitz}--~\ref{ass:smooth}, such alternating methods are proven to converge to first-order stationary points~\cite{pmlr-v119-lin20a,bilevel_alternative}, with iterates $\{{\theta}^{(t)}, {\varphi}^{(t)}, {\psi}^{(t)}, {\omega}^{(t)}\}$ satisfying $\liminf_{t\to\infty} \mathbb{E}[\|\nabla \mathcal{L}({\theta}^{(t)})\|^2] = 0$, indicating convergence to a point where no gradient-based improvement is possible. \textit{Third}, Theorem~\ref{thrm:excess_risk} provides the critical quality guarantee: it formally proves that any solution minimizing our empirical surrogate risk, the robust stationary point found by Algorithm~\ref{alg:wasecom}, achieves bounded worst-case excess risk on unseen data within Wasserstein balls. This ensures that the \textit{local optimum} identified by our algorithm is not arbitrary, but provably robust and generalizable under distributional shifts encountered in wireless environments.

}

\section{Numerical Results}
\label{sec:experiment}

\subsection{Experiment Setup}
\subsubsection{Datasets}
To evaluate the effectiveness of our proposed method for robust wireless semantic communication, we employ two widely used datasets that represent distinct modalities: visual and textual. For the image-based tasks, we use \textit{CIFAR-10}, a benchmark dataset comprising 60,000 color images of size $32 \times 32$, evenly distributed across 10 object categories with 6,000 samples per class. Its diversity and manageable scale make it suitable for assessing visual semantic communication performance. For text-based evaluation, we utilize the \textit{Europarl} corpus, a large-scale multilingual parallel dataset extracted from the proceedings of the European Parliament spanning 1996 to 2011. This corpus includes sentence-aligned text in 21 European languages from various linguistic families, with bilingual pairings ranging from 400,000 to 2 million sentence pairs depending on the language combination. Additionally, it provides monolingual corpora containing 7 million to 54 million words per language for nine languages. Europarl is a well-established benchmark for machine translation and semantic evaluation in multilingual settings.

\subsubsection{Models}
We leverage different large AI models as modality-specific backbones to perform robust semantic encoding and decoding. For image-based communication, we use a Denoising Autoencoder Vision Transformer (DAE-ViT)~\cite{vit} to extract high-level semantic features from input images. The DAE-ViT encoder splits images into patches, embeds them, and processes them through Transformer layers to produce a compact semantic representation. This is passed through a 2-layer MLP-based channel encoder to simulate modulation before being transmitted over a differentiable wireless channel. On the receiver side, an MLP-based channel decoder recovers the representation, which is then reconstructed by a DAE-ViT decoder.
For text-based communication, we use BERT-Base~\cite{bert} as the semantic encoder, which processes tokenized input text and outputs contextual embeddings. These embeddings are mean-pooled into a semantic vector, passed through an MLP-based channel encoder, followed by a channel model and a channel decoder, as in the image case. The output is fed into a Transformer-based decoder, initialized from a pre-trained model and fine-tuned to reconstruct the original sentence.

\blue{\subsubsection{Robustness Simulation Strategy} Our experimental evaluation is designed to quantitatively validate the dual-robustness capabilities of \OurAlg against two distinct types of distributional shifts. We assess semantic robustness by employing FGSM adversarial attacks at varying levels of intensity. Specifically, we evaluate performance under attacks with an $\ell_\infty$-norm perturbation strength configured to represent both moderate (10\%) and severe (30\%) noise conditions. This allows for a precise analysis of performance degradation as the semantic attack becomes more potent. Concurrently, we evaluate channel robustness by measuring performance across a wide spectrum of channel conditions, from 0 to 30 dB Signal-to-Noise Ratio (SNR), under both AWGN and Rayleigh fading models. This wide SNR range simulates environments from very poor to excellent channel quality. This methodology allows for a quantitative analysis of the framework's resilience to both the statistical nature of the channel and its time-varying quality.}

\subsubsection{Baselines}
To benchmark the effectiveness of \OurAlg, we compare it against several state-of-the-art SemCom methods. 
For image transmission, we consider two baselines:
(1) DeepJSCC\cite{deepjscc}, an end-to-end model that jointly optimizes source and channel coding for wireless image transmission;
(2) DeepSC-RI\cite{robust_sc_2}, which improves robustness by incorporating a multi-scale semantic extractor based on ViT and a cross-attention-based semantic fusion module.
For text-based tasks, we include DeepSC~\cite{deepsc}, a Transformer-based model designed to preserve semantic meaning during transmission by optimizing both system capacity and semantic accuracy.

\subsubsection{Evaluation Metrics}
We evaluate model performance using standard metrics for both image and text modalities. For images, we use \textit{Peak Signal-to-Noise Ratio (PSNR)} to measure the fidelity of reconstruction, where higher values indicate better visual quality, and \textit{Structural Similarity Index (SSIM)} to assess perceived structural similarity between original and reconstructed images. For text, we use the \textit{BLEU score}, which compares machine-generated sentences to reference translations based on n-gram overlap, with scores closer to 1 indicating higher semantic similarity and better preservation of meaning.

\subsubsection{Training Details} 
\blue{
We train the \OurAlg pipeline end-to-end using Algorithm~\ref{alg:wasecom} with the Adam optimizer~\cite{Kingma2014AdamAM}, batch size of 128, and 100 training epochs. The outer loop optimizes the channel encoder and decoder under worst-case channel perturbations within Wasserstein ball $\mathcal{B}_p(\hat{\mathcal{Z}}_n, \mu)$ with radius $\mu = 0.01$, simulated via differentiable AWGN or Rayleigh fading models. The inner loop enhances robustness against semantic distributional shifts within $\mathcal{B}_p(\hat{\mathcal{P}}_n, \rho)$ semantic radius $\rho = 0.05$ using FGSM adversarial perturbations. Dual variables $\lambda$ and $\gamma$ are initialized to 1.0 and updated automatically via gradient descent with learning rate $5 \times 10^{-3}$. The log-sum-exp smoothing parameter is set to $\epsilon = 0.1$ for all experiments. All components are differentiable, which enables end-to-end gradient-based optimization. Gradient clipping with maximum norm 1.0 is applied to ensure training stability.}

Experiments are conducted using Python 3.10, PyTorch 2.1.0, and CUDA 12.1 on an Intel® Xeon® W-3335 server with 512GB RAM and NVIDIA RTX 4090 GPUs.

\subsection{Main Results}

\subsubsection{Performance on Image Semantic Communication}

We evaluate the robustness of image-based SemCom under varying levels of semantic and channel noise conditions using the CIFAR-10 dataset. The proposed method, \OurAlg, is benchmarked against DeepJSCC, a non-robust baseline, and DeepSC-RI, which primarily enhances robustness to semantic perturbations. The evaluation spans three semantic conditions (clean input, 10\% FGSM, and 30\% FGSM perturbation) and two channel models: AWGN and Rayleigh fading, across a wide SNR range from 0 to 30~dB. We report performance using PSNR and SSIM metrics.

\blue{As depicted in Fig.~\ref{fig:combined} and~\ref{fig:combined_2}, and detailed quantitatively in Tables~\ref{tab:Rayleigh_results} and~\ref{tab:AWGN_results}, all methods achieve competitive reconstruction quality under clean (noise-free) conditions. Under AWGN with clean inputs (Fig.~\ref{fig:combined}, Table~\ref{tab:AWGN_results}), \OurAlg demonstrates consistent advantages across the entire SNR range: at low SNR (0--10~dB), it achieves {+0.80 to +1.28~dB PSNR improvement} over DeepJSCC, while at high SNR (25--30~dB), the gap narrows to {+0.30 to +0.60~dB} (e.g., 29.80~dB vs. 29.20~dB at 30~dB), confirming that \OurAlg's robustness mechanisms do not significantly compromise performance under favorable channel conditions. This pattern stems from the channel-level WDRO formulation, which optimizes for worst-case perturbations, providing substantial benefits in challenging propagation regimes while converging to semantic encoder-decoder capacity limits at high SNR where channel effects become negligible.} 

\blue{
The trends are observed Under Rayleigh fading (Fig.~\ref{fig:combined_2}, Table~\ref{tab:Rayleigh_results}), where \OurAlg demonstrates larger
advantages at low-to-moderate SNR, validating robustness to time-varying multipath channels. When semantic perturbations are introduced (FGSM with 10\% noise), DeepJSCC experiences degradation in both PSNR and SSIM, particularly under Rayleigh fading. DeepSC-RI mitigates this drop effectively due to its masked representation design, but still suffers at lower SNRs. In contrast, \OurAlg maintains more stable performance across the SNR range, with the performance gap widening further under adversarial conditions. This behavior validates our bilevel WDRO design: while DeepSC-RI enforces robustness at the semantic encoding level, \OurAlg jointly addresses uncertainty in both semantic and channel domains through independent Wasserstein balls. This dual consideration enables adaptive encoding that maintains semantic fidelity even under the combined stress of input perturbations and channel degradation.}

\begin{figure}[t]
    \centering
    \begin{subfigure}{\linewidth}
        \centering
        \includegraphics[width=0.85\linewidth]{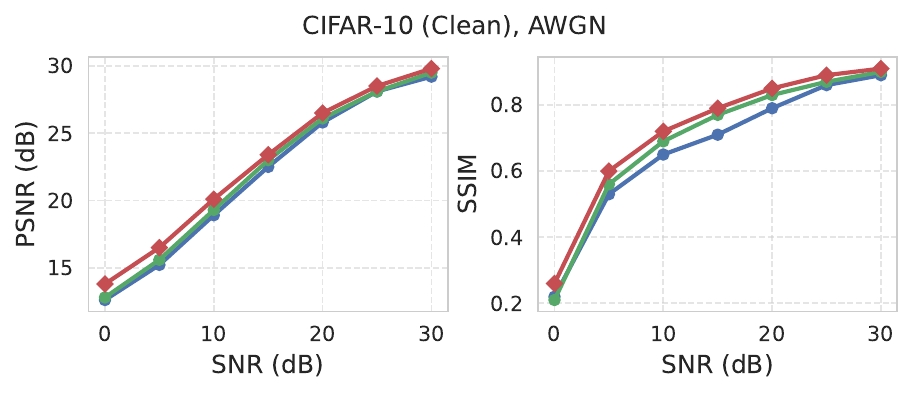}
        \label{fig:acc}
    \end{subfigure}
    
    \vspace{-10pt} 
    
    \begin{subfigure}{\linewidth}
        \centering
        \includegraphics[width=0.85\linewidth]{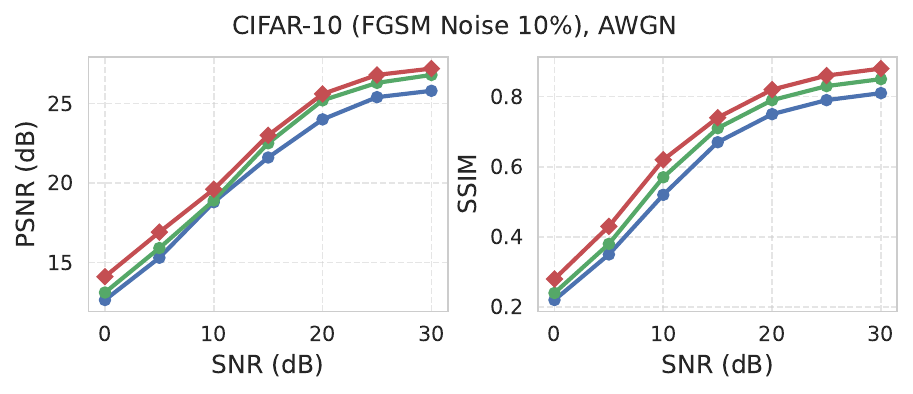}
        \label{fig:psnr10}
    
    \end{subfigure}
    \vspace{-10pt} 
    
    \begin{subfigure}{\linewidth}
        \centering
        \includegraphics[width=0.85\linewidth]{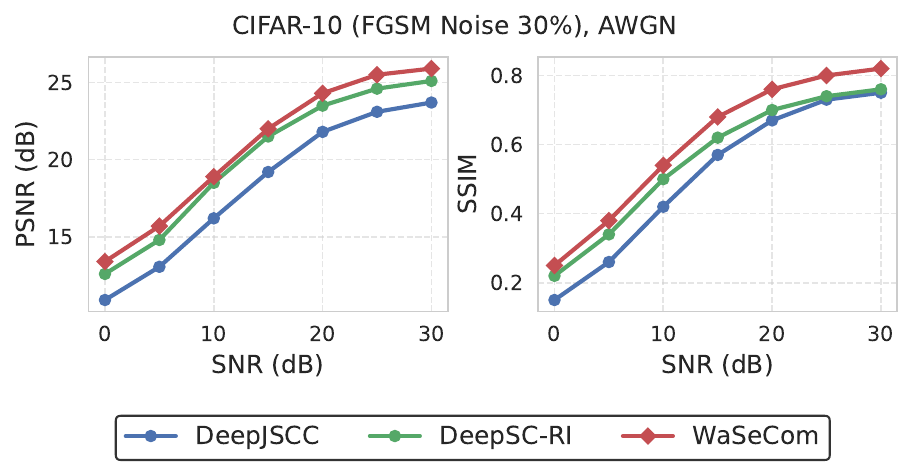}
        \label{fig:psnr30}
    \end{subfigure}
    \caption{Performance of image transmission tasks with different semantic noise ratio under AWGN channel}
    \label{fig:combined}
\end{figure}

\begin{figure}[h]
    \centering
    \begin{subfigure}{\linewidth}
        \centering
        \includegraphics[width=0.85\linewidth]{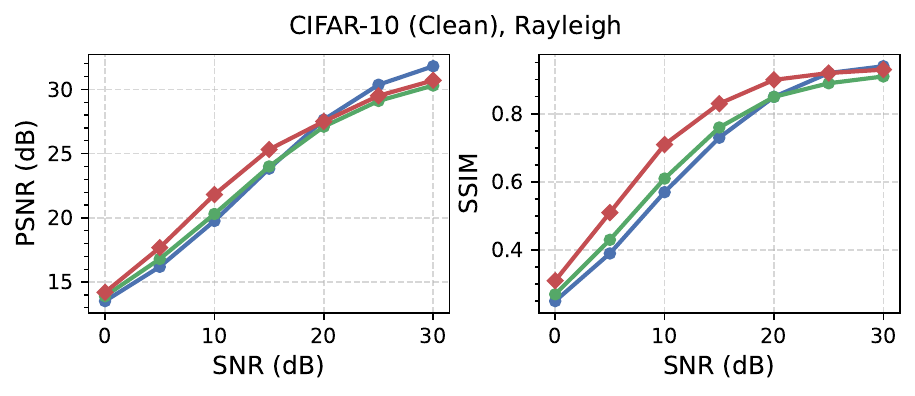}
        \label{fig:ray_leigh_acc}
    \end{subfigure}
    
    \vspace{-10pt} 

    \begin{subfigure}{\linewidth}
        \centering
        \includegraphics[width=0.85\linewidth]{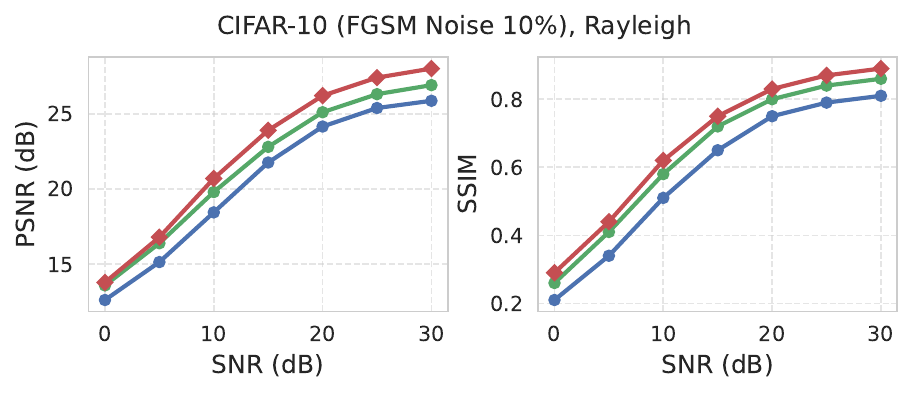}
        \label{fig:ray_leigh_psnr10}
    \end{subfigure}
    
    \vspace{-10pt} 

    \begin{subfigure}{\linewidth}
        \centering
        \includegraphics[width=0.85\linewidth]{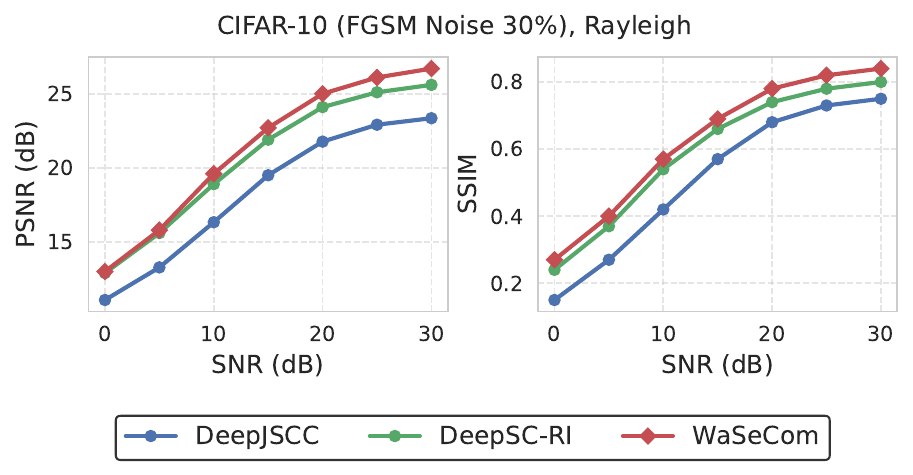}
        \label{fig:ray_leigh_psnr30}
    \end{subfigure}

    \caption{Performance of image transmission tasks with different semantic noise ratio under Rayleigh channel}
    \label{fig:combined_2}
\end{figure}

Under stronger semantic noise (FGSM with 30\% noise ratio), the advantage of \OurAlg becomes more visible. DeepJSCC’s performance degrades rapidly, particularly below 10~dB SNR. This confirms its vulnerability to out-of-distribution inputs. DeepSC-RI remains more stable but begins to plateau, while \OurAlg demonstrates lower degradation rates across all SNRs. This resilience stems from its ability to optimize under worst-case semantic and channel shifts, resulting in a flatter degradation curve. For example, at 10~dB SNR under Rayleigh fading, \OurAlg retains considerable higher PSNR and SSIM compared to both baselines.

Channel variability has a marked impact on all methods, especially under Rayleigh fading. The performance gap between AWGN and Rayleigh conditions is widest at low SNRs, where multipath fading dominates. Although robust to semantic shifts, DeepSC-RI often lacks channel-aware training and thus underperforms in fading scenarios. In contrast, \OurAlg  outperforms both baselines in Rayleigh channels by an observable amount, highlighting the effectiveness of integrating channel uncertainty into the optimization objective. The WDRO formulation enables \OurAlg to anticipate adversarial channel conditions, leading to smoother performance curves across diverse environments.

\blue{A quantitative comparison (Tables~\ref{tab:AWGN_results} and~\ref{tab:Rayleigh_results}) reveals that WaSeCom provides superior robustness under challenging conditions. At low SNRs, its resilience is evident: under the AWGN channel with 30\% FGSM noise at 0 dB, WaSeCom achieves 13.40 dB, a notable improvement over DeepSC-RI (12.60 dB) and DeepJSCC (10.91 dB). This advantage is most pronounced at high SNRs under strong adversarial attack. For example, in the Rayleigh channel at 30 dB with 30\% FGSM noise, WaSeCom maintains 26.70 dB, whereas DeepSC-RI and DeepJSCC degrade to 25.60 dB and 23.35 dB, respectively.} The empirical trends align with the theoretical motivations of \OurAlg. By modeling the robustness problem as a bi-level optimization over Wasserstein balls, the model is encouraged to generalize beyond nominal data distributions. The inner semantic-level objective regularizes feature encoding against worst-case input shifts, while the outer channel-level optimization ensures that these features are decodable under stochastic degradation. This layered robustness yields improvements in different noise settings and contributes to stable performance under mild perturbations and channel variance.

\subsubsection{Performance on Text Semantic Communication}

We evaluate the performance of our proposed \OurAlg framework for text semantic transmission under a variety of wireless channel conditions using the BLEU score. \OurAlg is compared with DeepSC~\cite{deepsc}, a state-of-the-art Transformer-based SemCom model that focuses on maximizing semantic fidelity but lacks robustness mechanisms against channel and input perturbations.

Fig.~\ref{fig:BLEU_SNR} presents the BLEU scores of \OurAlg and DeepSC under Rayleigh fading and AWGN channels across a range of signal-to-noise ratios (0dB to 18dB). Under Rayleigh fading, \OurAlg demonstrates superior robustness at low SNRs. For example, at 0dB, \OurAlg achieves a BLEU score of approximately 0.55 compared to DeepSC’s 0.50, with the performance gap remaining consistent (0.03–0.05 points) across all SNR levels. A similar trend is observed under AWGN, where \OurAlg again maintains a 0.05 BLEU point advantage, achieving 0.50 at 0dB while DeepSC trails at 0.45. Although the absolute improvements are modest, \OurAlg consistently delivers more stable and resilient performance in both channel types, confirming the benefit of its bi-level training strategy that explicitly optimizes for worst-case semantic and channel perturbations.

\begin{table*}[t]
\centering
\scalebox{0.83}{
\begin{tabular}{|c||ccc|ccc|ccc|}
\hline
\multirow{2}{*}{\textbf{SNR (dB)}}
& \multicolumn{3}{c|}{\textbf{Noise-Free}} 
& \multicolumn{3}{c|}{\textbf{FGSM Noise 10\%}}
& \multicolumn{3}{c|}{\textbf{FGSM Noise 30\%}} \\
\cline{2-10}
& DeepJSCC & DeepSC-RI & \OurAlg 
& DeepJSCC & DeepSC-RI & \OurAlg
& DeepJSCC & DeepSC-RI & \OurAlg \\ 
\hline\hline
0  & 12.61 & 12.80 & \underline{13.80} 
   & 12.63 & 13.10 & \underline{14.10}
   & 10.91 & 12.60 & \underline{13.40} \\
5  & 15.22 & 15.60 & \underline{16.50} 
   & 15.29 & 15.90 & \underline{16.90}
   & 13.06 & 14.80 & \underline{15.70} \\
10 & 18.91 & 19.30 & \underline{20.10} 
   & 18.79 & 18.90 & \underline{19.60}
   & 16.20 & 18.50 & \underline{18.90} \\
15 & 22.50 & 23.00 & \underline{23.40} 
   & 21.60 & 22.50 & \underline{23.00}
   & 19.20 & 21.50 & \underline{22.00} \\
20 & 25.80 & 26.10 & \underline{26.50} 
   & 24.00 & 25.20 & \underline{25.60}
   & 21.80 & 23.50 & \underline{24.30} \\
25 & 28.10 & 28.10 & \underline{28.50} 
   & 25.40 & 26.30 & \underline{26.80}
   & 23.10 & 24.60 & \underline{25.50} \\
30 & 29.20 & 29.50 & \underline{29.80} 
   & 25.80 & 26.80 & \underline{27.20}
   & 23.70 & 25.10 & \underline{25.90} \\
\hline
\end{tabular}
}
\caption{\blue{PSNR (dB) comparison under Clean and FGSM attacks over the {AWGN} channel on CIFAR-10.}}
\label{tab:AWGN_results}
\end{table*}

\begin{table*}[t]
\centering
\scalebox{0.83}{
\begin{tabular}{|c||ccc|ccc|ccc|}
\hline
\multirow{2}{*}{\textbf{SNR (dB)}}
& \multicolumn{3}{c|}{\textbf{Noise-Free}} 
& \multicolumn{3}{c|}{\textbf{FGSM Noise 10\%}} 
& \multicolumn{3}{c|}{\textbf{FGSM Noise 30\%}} \\
\cline{2-10}
& DeepJSCC & DeepSC-RI & \OurAlg 
& DeepJSCC & DeepSC-RI & \OurAlg
& DeepJSCC & DeepSC-RI & \OurAlg \\ 
\hline\hline
0  & 13.52 & 13.90 & \underline{14.20}
   & 12.63 & 13.60 & \underline{13.80}
   & 11.08 & 12.90 & \underline{13.00} \\
5  & 16.19 & 16.80 & \underline{17.68}
   & 15.15 & 16.40 & \underline{16.80}
   & 13.28 & 15.60 & \underline{15.80} \\
10 & 19.76 & 20.30 & \underline{21.81}
   & 18.45 & 19.80 & \underline{20.70}
   & 16.33 & 18.90 & \underline{19.60} \\
15 & 23.83 & 24.00 & \underline{25.32}
   & 21.76 & 22.80 & \underline{23.90}
   & 19.50 & 21.90 & \underline{22.70} \\
20 & 27.64 & 27.10 & \underline{27.65}
   & 24.15 & 25.10 & \underline{26.20}
   & 21.77 & 24.10 & \underline{25.00} \\
25 & \underline{30.36} & 29.10 & {29.50}
   & 25.38 & 26.30 & \underline{27.40}
   & 22.91 & 25.10 & \underline{26.10} \\
30 & \underline{31.80} & 30.30 & {30.70}
   & 25.86 & 26.90 & \underline{28.00}
   & 23.35 & 25.60 & \underline{26.70} \\
\hline
\end{tabular}
}
\caption{\blue{PSNR (dB) comparison under Clean and FGSM attacks over the {Rayleigh} channel on CIFAR-10.}}
\label{tab:Rayleigh_results}
\end{table*}

To further evaluate robustness, we test both models under FGSM adversarial attacks with a perturbation strength of 10\%. Fig.~\ref{fig:BLEU_SNR_FGSM} shows BLEU scores under adversarially perturbed inputs across both fading scenarios. \blue{Under Rayleigh fading, WaSeCom exhibits a sharp performance advantage: it reaches approximately $0.75$ BLEU at $5$dB and stabilizes in the $0.90–0.92$ range at an SNR of approximately $20$dB. This widens its performance gap over DeepSC, which peaks lower in the $0.83–0.85$ range.} The gap widens to 0.10–0.15 BLEU points at 30dB, highlighting \OurAlg’s greater resilience to adversarial semantic distortions in harsh channel conditions. \blue{Under the AWGN channel, WaSeCom shows a distinct performance advantage in the low-to-mid SNR regime ($0-12$ dB). At SNRs of $13$ dB and above, both models reach a performance ceiling, achieving comparable near-perfect BLEU scores. This demonstrates that WaSeCom's robustness is most impactful in challenging channel conditions without sacrificing performance in high-quality channels.} \blue{Under the Rayleigh channel, WaSeCom's performance is highly competitive, showing a slight advantage at low SNRs (0-3 dB) while performing comparably to DeepSC in the mid-SNR range (6-12 dB). Under the AWGN channel, however, WaSeCom demonstrates a clear and consistent performance advantage across all tested SNRs. This nuanced result confirms its effective generalization, particularly showcasing its robustness in less variable channel conditions.}
\blue{The performance dip for \OurAlg observed in the 9–12 dB SNR region of Fig.~\ref{fig:BLEU_SNR_FGSM}a is an expected consequence of the robustness–fidelity trade-off inherent in WDRO. Within this intermediate SNR range, our framework prioritizes worst-case stability, allocating model capacity to defend against distributional shifts. In contrast, DeepSC, which is optimized solely for average-case performance, can achieve slightly higher fidelity in this narrow band. However, The performance of \OurAlg is better at both lower ($<$9 dB) and higher ($>$12 dB) SNRs, which demonstrates its greater overall resilience, which is the primary objective of its design.} 

Overall, these results validate \OurAlg's design objective of achieving dual robustness. It helps maintain semantic fidelity in the face of both channel degradation and input-level adversarial shifts. While improvements over DeepSC are not always large in absolute terms, they are consistent, particularly in the more challenging Rayleigh fading and adversarial settings. This makes \OurAlg a more reliable solution for real-world text SemCom systems.

\begin{figure}
    \centering	    
    \includegraphics[width=\columnwidth]{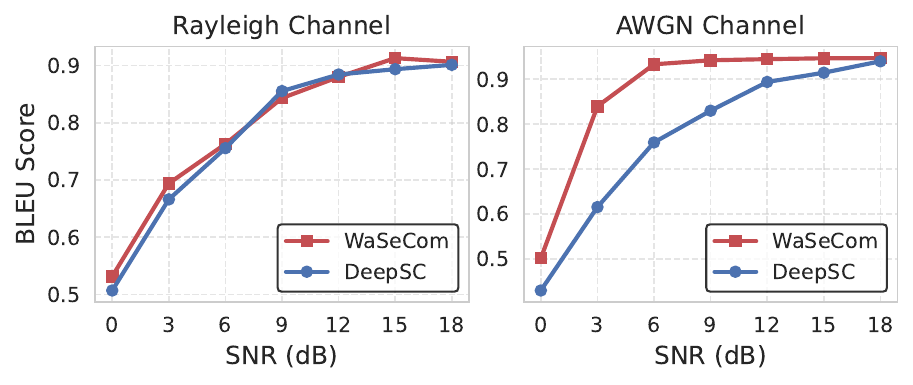}
    \caption{Performance of text transmission without semantic noise under different channels}
    \label{fig:BLEU_SNR}
\end{figure}

\begin{figure}
    \centering
    \includegraphics[width=\columnwidth]{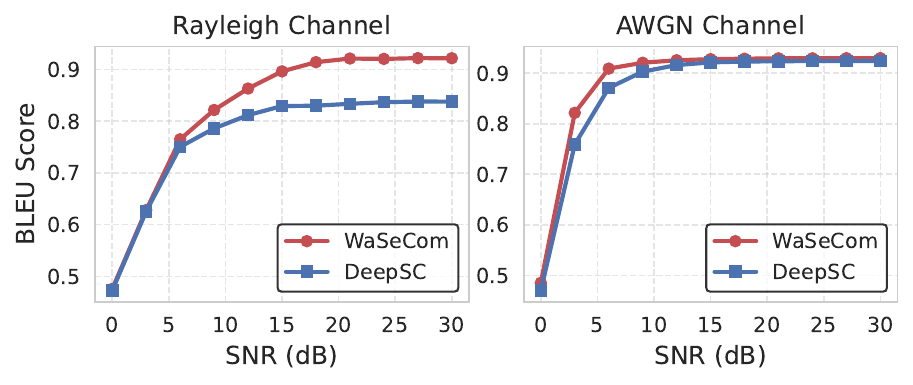}
    \caption{Performance of text transmission with semantic noise (added by an FGSM adversarial attack) under different channels}
    \label{fig:BLEU_SNR_FGSM}
\end{figure}

\blue{
\subsection{Ablation Studies}
\label{sec:ablation}

\subsubsection{Convergence Results}
\label{sec:conv_results}

To empirically validate the convergence properties established in Section~\ref{sec:generalization}, we analyze the training dynamics of Algorithm~\ref{alg:wasecom} over 100 epochs on the CIFAR-10 dataset under FGSM adversarial perturbations (10\% noise).
Fig.~\ref{fig:conv} illustrates the convergence behavior for both PSNR (Fig.~\ref{fig:conv_psnr}) and SSIM (Fig.~\ref{fig:conv_ssim}) metrics. Both exhibit three distinct phases: \textit{(i) Rapid initial improvement} (epochs 0--20), where PSNR increases from approximately 14 dB to 25 dB and SSIM rises from 0.20 to 0.80, reflecting efficient optimization through the initial loss landscape; \textit{(ii) Gradual refinement} (epochs 20--50), characterized by continued but decelerating improvement as the algorithm approaches a stationary point; and \textit{(iii) Stable convergence} (epochs 50--100), where metrics plateau with minimal oscillation—PSNR stabilizes around 27--28 dB and SSIM near 0.90--0.91.

Critically, the training and validation curves remain tightly aligned throughout all phases, with maximum deviation $< 0.3$ dB for PSNR and $< 0.02$ for SSIM. This tight coupling indicates \textit{no overfitting}, a direct consequence of the \OurAlg's objective that inherently regularizes against distributional shifts by optimizing over Wasserstein balls rather than the empirical distribution alone. The smooth, monotonic convergence without significant oscillations validates the stability of the bilevel alternating optimization scheme, and thus confirming that the inner and outer loops sufficiently coordinate to find a robust stationary solution.


\begin{figure}[t]
    \centering
    \begin{subfigure}{\linewidth}
		\centering	    
        \includegraphics[width=0.65\linewidth]{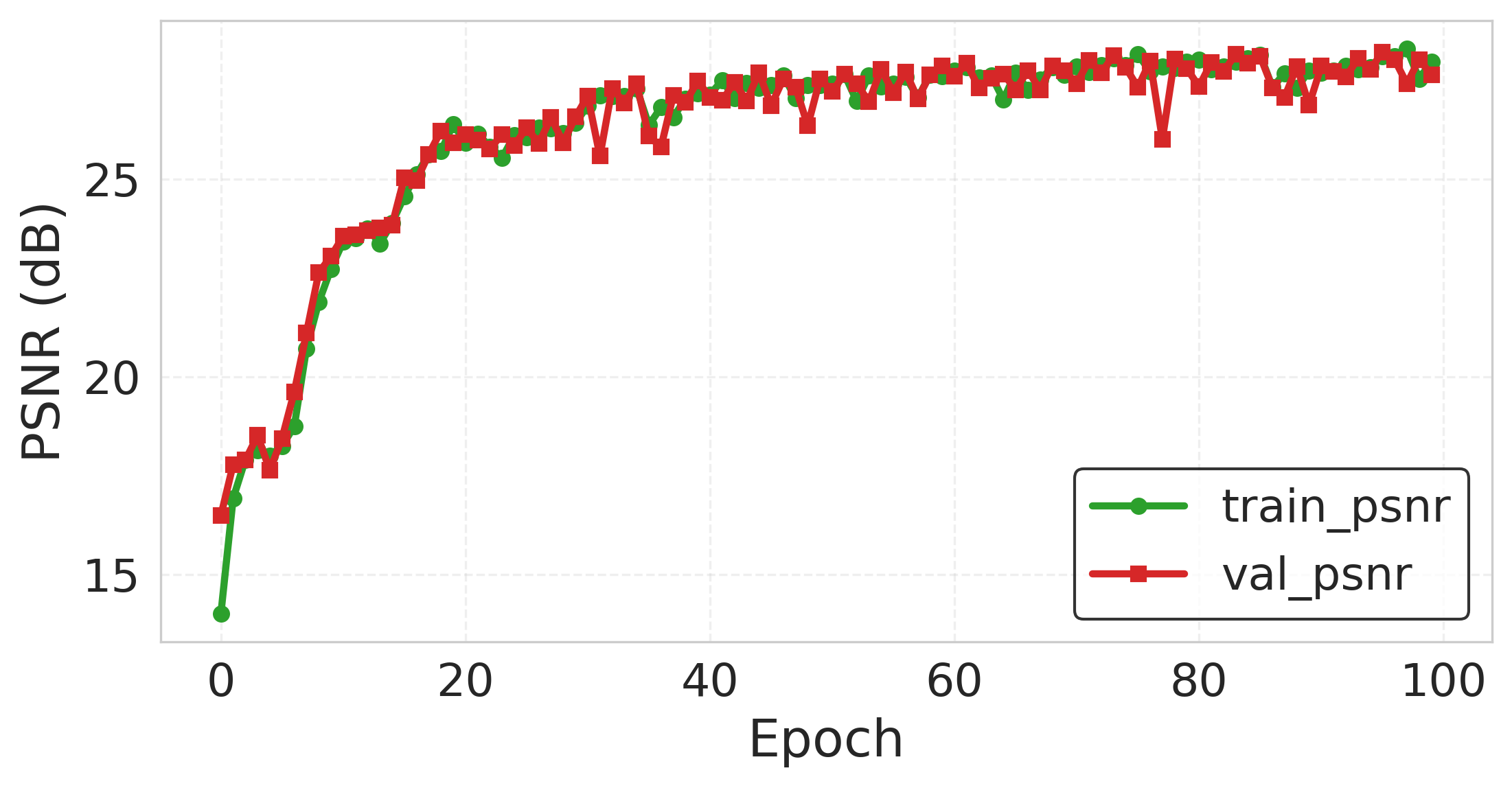}
        \caption{PSNR convergence}
        \label{fig:conv_psnr}
    \end{subfigure}
    \begin{subfigure}{\linewidth}
        \centering
        \includegraphics[width=0.65\linewidth]{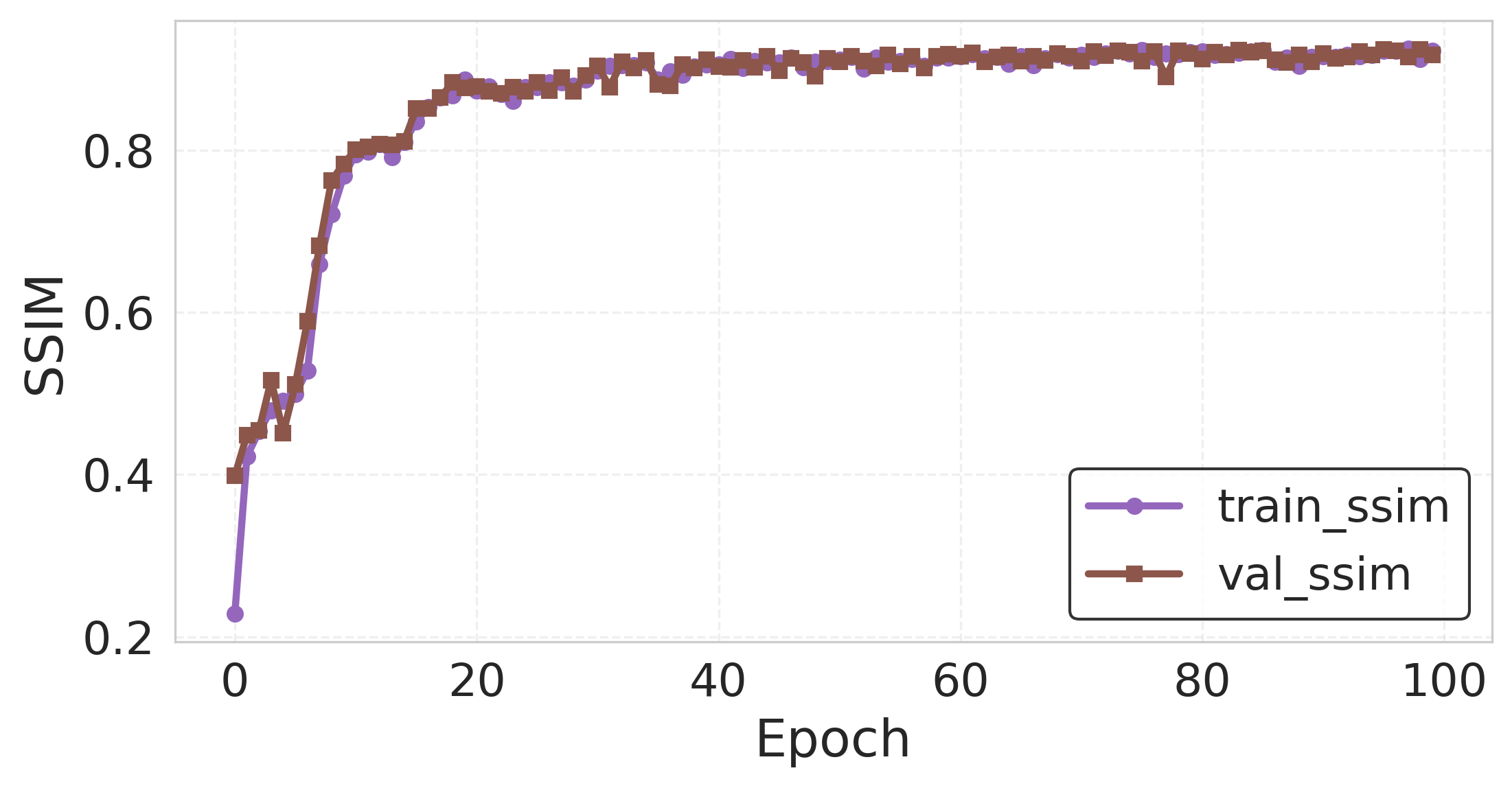}
        \caption{SSIM convergence}
        \label{fig:conv_ssim}
    \end{subfigure}
    \caption{\blue{Empirical convergence analysis of \OurAlg in terms of (a) PSNR and (b) SSIM.}}
    \label{fig:conv}
\end{figure}

\subsubsection{Sensitivity Analysis}
\blue{
To evaluate the robustness of \OurAlg to hyperparameter selection, we conduct a comprehensive sensitivity analysis of the Wasserstein radii $\rho$ and $\mu$ under adversarial semantic noise (FGSM with 10\% noise). This analysis quantifies how each parameter affects the robustness-fidelity trade-off and identifies optimal operating regions.}

\blue{
Table~\ref{tab:mu_sweep} reveals a clear non-monotonic relationship between the channel-level Wasserstein radius $\mu$ and reconstruction quality. When $\mu$ is undersized (e.g., $\mu$ = 0.005), the model exhibits insufficient robustness to channel-level distributional shifts, resulting in degraded performance (PSNR $\approx$ 21.97 dB and SSIM $\approx$ 0.75). As $\mu$ increases to 0.01, both metrics reach their peak (PSNR $\approx$ 23.33 dB and SSIM $\approx$ 0.80), representing the best balance between robustness to channel uncertainty and reconstruction fidelity under nominal conditions. However, further increases lead to monotonic degradation: at $\mu$ = 0.5, PSNR drops by 0.81 dB compared to the optimum. This decline occurs because excessively large ambiguity sets force the model to optimize for unrealistically severe channel perturbations, resulting in overly conservative representations that sacrifice nominal reconstruction quality.}

\blue{Table~\ref{tab:rho_sweep} demonstrates that the semantic-level radius $\rho$ exhibits a similar non-monotonic trend, but with a broader optimal region. At $\rho$ = 0.005, insufficient regularization yields suboptimal robustness (PSNR $\approx$ 23.23 dB and SSIM $\approx$ 0.79). Performance improves as $\rho$ increases, reaching its optimum at $\rho$ = 0.05 (PSNR $\approx$ 23.93 dB and SSIM $\approx$ 0.80), representing a gain of 0.604 dB over $\rho$ = 0.01. This improvement occurs because moderate expansion of the semantic ambiguity set enables learning of more robust feature representations that explicitly account for adversarial input shifts. Beyond this optimum, performance degrades gradually: even at $\rho$ = 0.5, the model retains reasonable performance (PSNR $\approx$ 23.53 dB). This gentler degradation compared to $\mu$ suggests semantic-level robustness is more forgiving to overestimation, though excessively large $\rho$ values still induce over-smoothing that reduces fine-grained semantic information.
These results demonstrate that \OurAlg achieves stable performance across a reasonable range of hyperparameters ($\mu \in [0.01, 0.05]$ and $\rho \in [0.01, 0.05]$), with optimal settings at $\mu = 0.01$ and $\rho = 0.05$ under this adversarial scenario. The relatively narrow sensitivity regions confirm the importance of proper calibration while indicating that the framework does not require extremely fine-tuned hyperparameters to achieve robust performance.}

\begin{table}[t]
\centering
\begin{tabular}{|c|c|cc|}
\hline
$\rho$ & $\mu$ & PSNR & SSIM \\
\hline
\hline
\multirow{6}{*}{0.01} & 0.005 & 21.986 & 0.747 \\
 & \underline{0.010} & \underline{23.328} & \underline{0.798} \\
 & 0.050 & 22.771 & 0.758 \\
 & 0.100 & 22.675 & 0.754 \\
 & 0.200 & 22.598 & 0.750 \\
 & 0.500 & 22.522 & 0.746 \\
\hline
\end{tabular}
\caption{\blue{Sensitivity to channel-level radius $\mu$  with $\rho{=}0.01$ under FGSM with 10\% noise. Best trade-off near $\mu{=}0.01$.}}
\label{tab:mu_sweep}
\end{table}

\begin{table}[t]
\centering
\begin{tabular}{|c|c|cc|}
\hline
$\mu$ & $\rho$ & PSNR & SSIM \\
\hline
\hline
\multirow{6}{*}{0.01} & 0.005 & 23.234 & 0.785 \\
 & 0.010 & 23.328 & 0.798 \\
 & \underline{0.050} & \underline{23.932} & \underline{0.802} \\
 & 0.100 & 23.684 & 0.793 \\
 & 0.200 & 23.596 & 0.789 \\
 & 0.500 & 23.532 & 0.786 \\
\hline
\end{tabular}
\caption{\blue{Sensitivity to semantic-level radius $\rho$ with $\mu{=}0.01$ under FGSM with 10\% noise.  Moderate $\rho$ (e.g., $0.05$) yields the best robustness–fidelity balance.}}
\label{tab:rho_sweep}
\end{table}

\blue{
To provide a more comprehensive assessment of hyperparameter sensitivity, Fig.~\ref{fig:fgsm_mu} and Fig.~\ref{fig:fgsm_rho} present the performance of \OurAlg across a wide range of SNR conditions, from severe noise ($-10$ dB) to high-quality channels ($30$ dB), while systematically varying the Wasserstein radii under FGSM 10\% adversarial perturbations.}

\blue{
The results in Fig\ref{fig:fgsm_mu} demonstrated remarkable stability of \OurAlg with respect to $\mu$ across the entire SNR spectrum. All tested values of $\mu \in [0.005, 0.5]$ produce nearly overlapping PSNR and SSIM curves, with maximum deviations limited to approximately $0.5$ dB in PSNR across all SNR regimes. This consistency indicates that once $\mu$ is set within a reasonable range, the framework maintains robust performance regardless of channel quality. The convergence of all curves at high SNR ($>20$ dB) suggests that channel-level robustness becomes less critical when transmission conditions are favorable, while the maintained separation at low SNR confirms that proper $\mu$ calibration provides meaningful protection under severe noise. Notably, even the extreme setting of $\mu = 0.5$ does not catastrophically degrade performance, deviating by less than $1$ dB from the optimal configuration across most conditions.}

\blue{Similarly, Fig.~\ref{fig:fgsm_rho} reveals that variations in $\rho$ over two orders of magnitude ($0.005$ to $0.5$) produce tightly clustered performance curves for both PSNR and SSIM across all SNR levels. The maximum spread among different $\rho$ values remains within approximately $0.8$ dB throughout the SNR range, with all configurations following nearly identical trends. This robustness to $\rho$ selection is particularly evident in the mid-to-high SNR regime ($5$–$30$ dB), where the curves are virtually indistinguishable, suggesting that semantic-level robustness mechanisms are effective across a broad hyperparameter range. At very low SNR ($<0$ dB), a slight divergence emerges, with moderate $\rho$ values ($0.05$–$0.1$) showing marginally better performance, consistent with the findings in Table~\ref{tab:rho_sweep}. The graceful degradation pattern across all $\rho$ settings confirms that \OurAlg does not exhibit sharp performance cliffs, making it practical  without exhaustive hyperparameter tuning.}

\blue{ These SNR-sweep experiments provide strong empirical evidence that \OurAlg is not critically sensitive to exact hyperparameter selection within reasonable bounds. The consistent performance across $\mu \in [0.01, 0.1]$ and $\rho \in [0.01, 0.1]$ for all channel conditions validates the framework's practical applicability and suggests that practitioners can achieve near-optimal performance with moderate hyperparameter search efforts. This stability is particularly valuable in real-world wireless systems where channel conditions vary dynamically and retuning hyperparameters for each scenario is infeasible.}

\begin{figure}[t]
\centering
\includegraphics[width=1.0\linewidth]{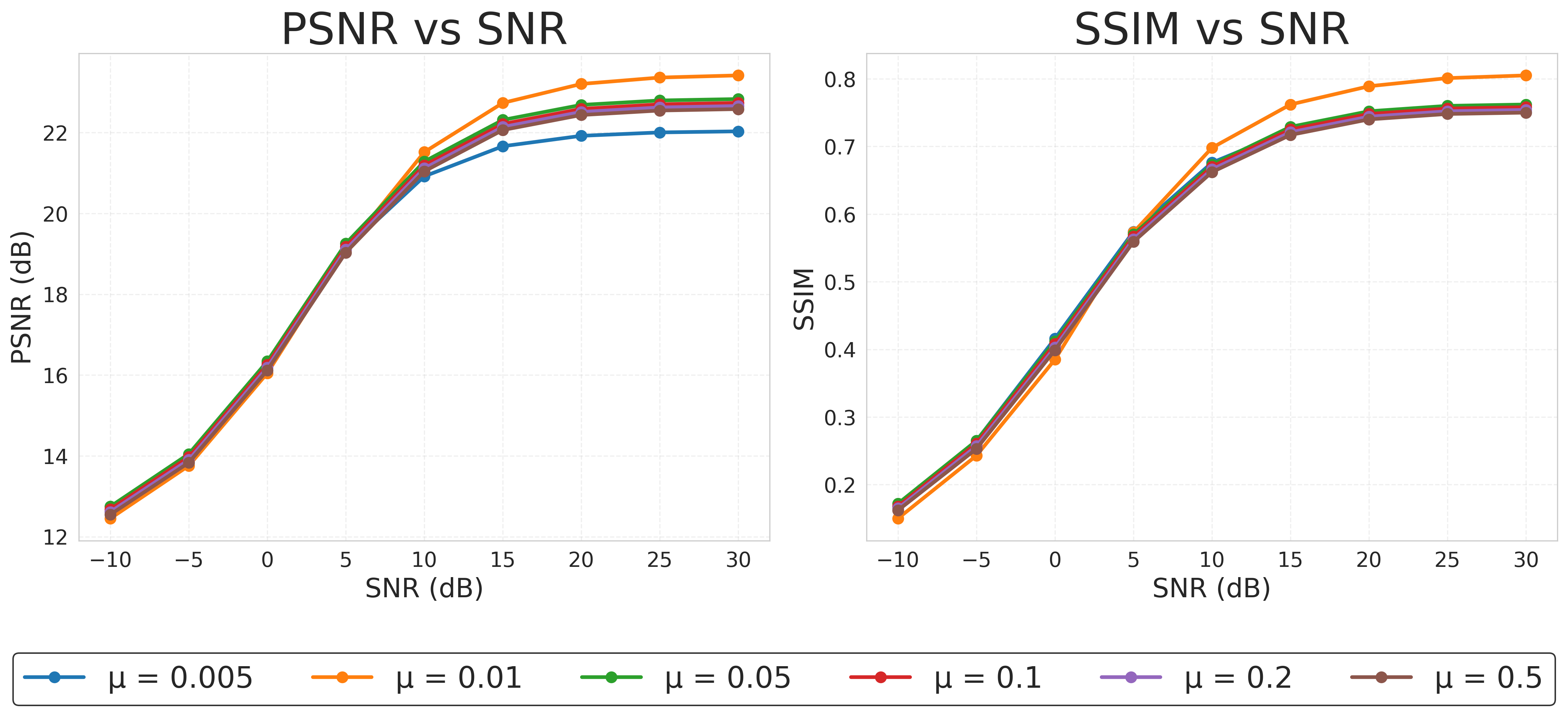}
\caption{\blue{Sensitivity of \OurAlg to the {channel-level radius} $\mu$ under FGSM Noise 10\% with fixed $\rho{=}0.01$. 
PSNR and SSIM show similar trends across $\mu\in[0.005,0.5]$, with only marginal differences ($<$ 0.5 dB PSNR), indicating robustness to $\mu$ selection.}}
\label{fig:fgsm_mu}
\end{figure}
\begin{figure}[t]
\centering
\includegraphics[width=1.0\linewidth]{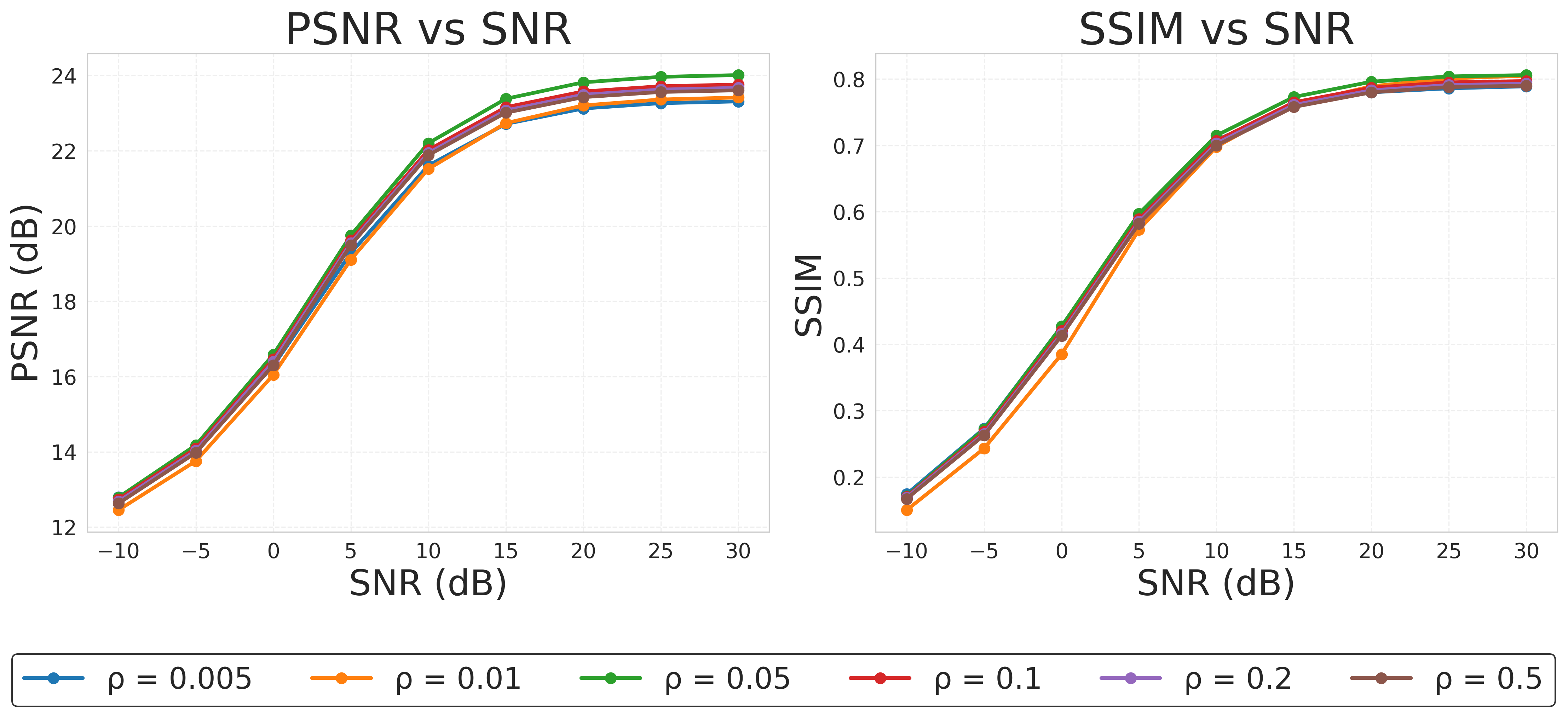}
\caption{\blue{Sensitivity of \OurAlg to the {semantic-level radius} $\rho$ under FGSM Noise 10\% with fixed $\mu{=}0.01$. 
Performance remains stable, suggesting that moderate changes in $\rho$ have little influence on the robustness–fidelity trade-off.}}
\label{fig:fgsm_rho}
\end{figure}
}
\section{Conclusion}
\label{sec:conclusion}


In this paper, we introduce \OurAlg, a novel framework designed to enhance the robustness of wireless SemCom in the presence of semantic and channel-level uncertainties. By formulating the problem within a bilevel, distributionally robust optimization problem, our approach mitigates semantic interpretation errors and transmission distortions. Theoretical analyses establish robustness guarantees under distributional shifts at both semantic and physical layers, while extensive experiments across multiple modalities and communication settings demonstrate consistent performance improvements over state-of-the-art baselines. These findings underscore the efficacy of integrating principled, worst-case optimization into next-generation task-oriented wireless communication. \blue{Despite its strengths, the framework presents limitations that guide future research. First, the bilevel optimization incurs extra computational overhead during offline training. Future work will therefore focus on exploring more efficient surrogate formulations to reduce this training complexity. Second, a comprehensive evaluation under dynamic, time-varying channel conditions is needed to fully validate the framework's practical applicability in mobile scenarios.}

\bibliographystyle{IEEEtran}
\bibliography{references}

\begin{thebibliography}{10}
\providecommand{\url}[1]{#1}
\csname url@samestyle\endcsname
\providecommand{\newblock}{\relax}
\providecommand{\bibinfo}[2]{#2}
\providecommand{\BIBentrySTDinterwordspacing}{\spaceskip=0pt\relax}
\providecommand{\BIBentryALTinterwordstretchfactor}{4}
\providecommand{\BIBentryALTinterwordspacing}{\spaceskip=\fontdimen2\font plus
\BIBentryALTinterwordstretchfactor\fontdimen3\font minus
  \fontdimen4\font\relax}
\providecommand{\BIBforeignlanguage}[2]{{%
\expandafter\ifx\csname l@#1\endcsname\relax
\typeout{** WARNING: IEEEtran.bst: No hyphenation pattern has been}%
\typeout{** loaded for the language `#1'. Using the pattern for}%
\typeout{** the default language instead.}%
\else
\language=\csname l@#1\endcsname
\fi
#2}}
\providecommand{\BIBdecl}{\relax}
\BIBdecl

\bibitem{semcom_app}
W.~Saad, O.~Hashash, C.~K. Thomas, C.~Chaccour, M.~Debbah, N.~Mandayam, and
  Z.~Han, ``Artificial general intelligence (agi)-native wireless systems: A
  journey beyond 6g,'' \emph{Proceedings of the IEEE}, pp. 1--39, 2025.

\bibitem{semcom_survey}
C.~Chaccour, W.~Saad, M.~Debbah, Z.~Han, and H.~Vincent~Poor, ``Less data, more
  knowledge: Building next-generation semantic communication networks,''
  \emph{IEEE Communications Surveys \& Tutorials}, vol.~27, no.~1, pp. 37--76,
  2025.

\bibitem{semcom_principle}
Z.~Qin, X.~Tao, J.~Lu, and G.~Y. Li, ``Semantic communications: Principles and
  challenges,'' \emph{arXiv}, vol. abs/2201.01389, 2021.

\bibitem{deepjscc}
E.~Bourtsoulatze, D.~B. Kurka, and D.~Gündüz, ``Deep joint source-channel
  coding for wireless image transmission,'' in \emph{Proceedings of the 2019
  IEEE International Conference on Acoustics, Speech and Signal Processing
  (ICASSP)}, 2019, pp. 4774--4778.

\bibitem{deepjscc_text}
N.~Farsad, M.~Rao, and A.~Goldsmith, ``Deep learning for joint source-channel
  coding of text,'' in \emph{Proceedings of the 2018 IEEE International
  Conference on Acoustics, Speech and Signal Processing (ICASSP)}.\hskip 1em
  plus 0.5em minus 0.4em\relax IEEE Press, 2018, p. 2326–2330.

\bibitem{deepsc_st}
Z.~Weng, Z.~Qin, X.~Tao, C.~Pan, G.~Liu, and G.~Y. Li, ``Deep learning enabled
  semantic communications with speech recognition and synthesis,'' \emph{IEEE
  Transactions on Wireless Communications}, vol.~22, no.~9, pp. 6227--6240,
  2023.

\bibitem{transformer}
A.~Vaswani, N.~Shazeer, N.~Parmar, J.~Uszkoreit, L.~Jones, A.~N. Gomez,
  L.~Kaiser, and I.~Polosukhin, ``Attention is all you need,'' in
  \emph{Proceedings of the 31st International Conference on Neural Information
  Processing Systems}, ser. NIPS'17.\hskip 1em plus 0.5em minus 0.4em\relax Red
  Hook, NY, USA: Curran Associates Inc., 2017, p. 6000–6010.

\bibitem{llm}
M.~A.~K. Raiaan, M.~S.~H. Mukta, K.~Fatema, N.~M. Fahad, S.~Sakib, M.~M.~J.
  Mim, J.~Ahmad, M.~E. Ali, and S.~Azam, ``A review on large language models:
  Architectures, applications, taxonomies, open issues and challenges,''
  \emph{IEEE Access}, vol.~12, pp. 26\,839--26\,874, 2024.

\bibitem{gpt}
G.~Yenduri, M.~Ramalingam, G.~C. Selvi, Y.~Supriya, G.~Srivastava, P.~K.~R.
  Maddikunta, G.~D. Raj, R.~H. Jhaveri, B.~Prabadevi, W.~Wang, A.~V. Vasilakos,
  and T.~R. Gadekallu, ``Gpt (generative pre-trained transformer)— a
  comprehensive review on enabling technologies, potential applications,
  emerging challenges, and future directions,'' \emph{IEEE Access}, vol.~12,
  pp. 54\,608--54\,649, 2024.

\bibitem{casual_reasoning}
C.~K. Thomas, C.~Chaccour, W.~Saad, M.~Debbah, and C.~S. Hong, ``Causal
  reasoning: Charting a revolutionary course for next-generation ai-native
  wireless networks,'' \emph{IEEE Vehicular Technology Magazine}, vol.~19,
  no.~1, pp. 16--31, 2024.

\bibitem{robust_sc_1}
Q.~Hu, G.~Zhang, Z.~Qin, Y.~Cai, G.~Yu, and G.~Y. Li, ``Robust semantic
  communications against semantic noise,'' in \emph{Proceedings of the 96th
  Vehicular Technology Conference (VTC2022-Fall)}, 2022, pp. 1--6.

\bibitem{masked_vqae}
------, ``Robust semantic communications with masked vq-vae enabled codebook,''
  \emph{IEEE Transactions on Wireless Communications}, vol.~22, no.~12, pp.
  8707--8722, 2023.

\bibitem{robust_sc_2}
X.~Peng, Z.~Qin, X.~Tao, J.~Lu, and K.~B. Letaief, ``A robust image semantic
  communication system with multi-scale vision transformer,'' \emph{IEEE
  Journal on Selected Areas in Communications}, vol.~43, no.~4, pp. 1278--1291,
  2025.

\bibitem{robust_sc_3}
Z.~Weng, Z.~Qin, and G.~Y. Li, ``Robust semantic communications for speech
  transmission,'' in \emph{ICASSP 2025 - 2025 IEEE International Conference on
  Acoustics, Speech and Signal Processing (ICASSP)}, 2025, pp. 1--5.

\bibitem{robust_sc_4}
X.~Peng, Z.~Qin, X.~Tao, J.~Lu, and L.~Hanzo, ``A robust semantic text
  communication system,'' \emph{IEEE Transactions on Wireless Communications},
  vol.~23, no.~9, pp. 11\,372--11\,385, 2024.

\bibitem{kuhn_wasserstein_2019}
D.~Kuhn, P.~M. Esfahani, V.~A. Nguyen, and S.~Shafieezadeh-Abadeh,
  ``Wasserstein {Distributionally} {Robust} {Optimization}: {Theory} and
  {Applications} in {Machine} {Learning},'' \emph{arXiv:1908.08729 [cs, math,
  stat]}, Aug. 2019, arXiv: 1908.08729.

\bibitem{gao_wasserstein_2020}
R.~Gao, X.~Chen, and A.~J. Kleywegt, ``Wasserstein {Distributionally} {Robust}
  {Optimization} and {Variation} {Regularization},'' \emph{arXiv:1712.06050
  [cs, math, stat]}, Oct. 2020, arXiv: 1712.06050.

\bibitem{traditional_com}
D.~Tse and P.~Viswanath, \emph{Fundamentals of Wireless Communication}.\hskip
  1em plus 0.5em minus 0.4em\relax Cambridge University Press, 2005.

\bibitem{6773024}
C.~E. Shannon, ``A mathematical theory of communication,'' \emph{The Bell
  System Technical Journal}, vol.~27, no.~3, pp. 379--423, 1948.

\bibitem{8054694}
T.~O’Shea and J.~Hoydis, ``An introduction to deep learning for the physical
  layer,'' \emph{IEEE Transactions on Cognitive Communications and Networking},
  vol.~3, no.~4, pp. 563--575, 2017.

\bibitem{shannon}
C.~E. Shannon, ``A mathematical theory of communication,'' \emph{The Bell
  System Technical Journal}, vol.~27, no.~3, pp. 379--423, 1948.

\bibitem{deepsc}
H.~Xie, Z.~Qin, G.~Y. Li, and B.-H. Juang, ``Deep learning enabled semantic
  communication systems,'' \emph{IEEE Transactions on Signal Processing},
  vol.~69, pp. 2663--2675, 2021.

\bibitem{mem_deepsc}
Z.~Qin, H.~Xie, and X.~Tao, ``Mem-deepsc: A semantic communication system with
  memory,'' in \emph{Proceedings of the 2023 IEEE International Conference on
  Communications}, 2023, pp. 3854--3859.

\bibitem{10654371}
H.~Zhang, M.~Tao, Y.~Sun, and K.~B. Letaief, ``Improving learning-based
  semantic coding efficiency for image transmission via shared semantic-aware
  codebook,'' \emph{IEEE Transactions on Communications}, pp. 1--1, 2024.

\bibitem{wireless_deep_video}
S.~Wang, J.~Dai, Z.~Liang, K.~Niu, Z.~Si, C.~Dong, X.~Qin, and P.~Zhang,
  ``Wireless deep video semantic transmission,'' \emph{IEEE Journal on Selected
  Areas in Communications}, vol.~41, no.~1, pp. 214--229, 2023.

\bibitem{bert}
J.~Devlin, M.-W. Chang, K.~Lee, and K.~Toutanova, ``{BERT}: Pre-training of
  deep bidirectional transformers for language understanding,'' in
  \emph{Proceedings of the 2019 Conference of the North {A}merican Chapter of
  the Association for Computational Linguistics: Human Language Technologies,
  Volume 1 (Long and Short Papers)}, J.~Burstein, C.~Doran, and T.~Solorio,
  Eds.\hskip 1em plus 0.5em minus 0.4em\relax Minneapolis, Minnesota:
  Association for Computational Linguistics, Jun. 2019, pp. 4171--4186.

\bibitem{semcom_vit}
\BIBentryALTinterwordspacing
M.~A. Mohsin, M.~Jazib, Z.~Alam, M.~F. Khan, M.~Saad, and M.~A. Jamshed,
  ``Vision transformer based semantic communications for next generation
  wireless networks,'' 2025. [Online]. Available:
  \url{https://arxiv.org/abs/2503.17275}
\BIBentrySTDinterwordspacing

\bibitem{vit}
A.~Dosovitskiy, L.~Beyer, A.~Kolesnikov, D.~Weissenborn, X.~Zhai,
  T.~Unterthiner, M.~Dehghani, M.~Minderer, G.~Heigold, S.~Gelly, J.~Uszkoreit,
  and N.~Houlsby, ``An image is worth 16x16 words: Transformers for image
  recognition at scale,'' in \emph{Proceedings of the International Conference
  on Learning Representations}, 2021.

\bibitem{wscs_video}
P.~Jiang, C.-K. Wen, S.~Jin, and G.~Y. Li, ``Wireless semantic communications
  for video conferencing,'' \emph{IEEE Journal on Selected Areas in
  Communications}, vol.~41, no.~1, pp. 230--244, 2023.

\bibitem{llm_semcom}
\BIBentryALTinterwordspacing
Z.~Wang, L.~Zou, S.~Wei, F.~Liao, J.~Zhuo, H.~Mi, and R.~Lai, ``Large language
  model enabled semantic communication systems,'' 2024. [Online]. Available:
  \url{https://arxiv.org/abs/2407.14112}
\BIBentrySTDinterwordspacing

\bibitem{walid_1}
C.~Kurisummoottil~Thomas, W.~Saad, and Y.~Xiao, ``Causal semantic communication
  for digital twins: A generalizable imitation learning approach,'' \emph{IEEE
  Journal on Selected Areas in Information Theory}, vol.~4, pp. 698--717, 2023.

\bibitem{robust_channel_2}
W.~Wang, X.~Yu, X.~Tong, R.~Yu, X.-P. Zhang, and S.~Huang, ``Robust deep joint
  source channel coding with time-varying noise,'' in \emph{GLOBECOM 2023 -
  2023 IEEE Global Communications Conference}, 2023, pp. 5665--5670.

\bibitem{robust_channel}
D.~B. Kurka and D.~Gündüz, ``Deepjscc-f: Deep joint source-channel coding of
  images with feedback,'' \emph{IEEE Journal on Selected Areas in Information
  Theory}, vol.~1, no.~1, pp. 178--193, 2020.

\bibitem{walid_2}
C.~Kurisummoottil~Thomas and W.~Saad, ``Neuro-symbolic causal reasoning meets
  signaling game for emergent semantic communications,'' \emph{IEEE
  Transactions on Wireless Communications}, vol.~23, no.~5, pp. 4546--4563,
  2024.

\bibitem{dro_2016}
H.~Namkoong and J.~C. Duchi, ``Stochastic gradient methods for distributionally
  robust optimization with f-divergences,'' in \emph{Advances in Neural
  Information Processing Systems}, D.~Lee, M.~Sugiyama, U.~Luxburg, I.~Guyon,
  and R.~Garnett, Eds., vol.~29.\hskip 1em plus 0.5em minus 0.4em\relax Curran
  Associates, Inc., 2016.

\bibitem{dro_2018}
J.~C. Duchi and H.~Namkoong, ``Learning models with uniform performance via
  distributionally robust optimization,'' \emph{ArXiv}, vol. abs/1810.08750,
  2018.

\bibitem{sinha_certifying_2020}
A.~Sinha, H.~Namkoong, R.~Volpi, and J.~Duchi, ``Certifying {Some}
  {Distributional} {Robustness} with {Principled} {Adversarial} {Training},''
  \emph{arXiv:1710.10571 [cs, stat]}, May 2020.

\bibitem{wassertein}
\BIBentryALTinterwordspacing
C.~Villani, \emph{The Wasserstein distances}.\hskip 1em plus 0.5em minus
  0.4em\relax Berlin, Heidelberg: Springer Berlin Heidelberg, 2009, pp.
  93--111. [Online]. Available:
  \url{https://doi.org/10.1007/978-3-540-71050-9_6}
\BIBentrySTDinterwordspacing

\bibitem{azizian2023exact}
W.~Azizian, F.~Iutzeler, and J.~Malick, ``Exact generalization guarantees for
  (regularized) wasserstein distributionally robust models,'' in
  \emph{Thirty-seventh Conference on Neural Information Processing Systems},
  2023.

\bibitem{tse2005fundamentals}
D.~Tse and P.~Viswanath, \emph{Fundamentals of wireless communication}.\hskip
  1em plus 0.5em minus 0.4em\relax Cambridge university press, 2005.

\bibitem{goodfellow_explaining_2015}
I.~J. Goodfellow, J.~Shlens, and C.~Szegedy, ``Explaining and {Harnessing}
  {Adversarial} {Examples},'' \emph{arXiv:1412.6572 [cs, stat]}, Mar. 2015,
  arXiv: 1412.6572.

\bibitem{wav2vec}
A.~Baevski, H.~Zhou, A.~Mohamed, and M.~Auli, ``wav2vec 2.0: a framework for
  self-supervised learning of speech representations,'' in \emph{Proceedings of
  the 34th International Conference on Neural Information Processing Systems},
  ser. NIPS '20.\hskip 1em plus 0.5em minus 0.4em\relax Red Hook, NY, USA:
  Curran Associates Inc., 2020.

\bibitem{santambrogio_optimal_2015}
F.~Santambrogio, \emph{\BIBforeignlanguage{en}{Optimal {Transport} for
  {Applied} {Mathematicians}: {Calculus} of {Variations}, {PDEs}, and
  {Modeling}}}, ser. Progress in {Nonlinear} {Differential} {Equations} and
  {Their} {Applications}.\hskip 1em plus 0.5em minus 0.4em\relax Birkhäuser
  Basel, 2015.

\bibitem{gao_distributionally_2016}
R.~Gao and A.~J. Kleywegt, ``Distributionally {Robust} {Stochastic}
  {Optimization} with {Wasserstein} {Distance},'' \emph{arXiv:1604.02199
  [math]}, Jul. 2016, arXiv: 1604.02199.

\bibitem{kanto_duality}
N.~Gozlan, C.~Roberto, P.-M. Samson, and P.~Tetali, ``Kantorovich duality for
  general transport costs and applications,'' \emph{Journal of Functional
  Analysis}, vol. 273, no.~11, pp. 3327--3405, 2017.

\bibitem{madry_towards_2019}
A.~Madry, A.~Makelov, L.~Schmidt, D.~Tsipras, and A.~Vladu, ``Towards {Deep}
  {Learning} {Models} {Resistant} to {Adversarial} {Attacks},''
  \emph{arXiv:1706.06083 [cs, stat]}, Sep. 2019, arXiv: 1706.06083.

\bibitem{bilevel_alternative}
\BIBentryALTinterwordspacing
S.~Ghadimi and M.~Wang, ``Approximation methods for bilevel programming,''
  2018. [Online]. Available: \url{https://arxiv.org/abs/1802.02246}
\BIBentrySTDinterwordspacing

\bibitem{lee_minimax_2018}
J.~Lee and M.~Raginsky, ``Minimax {Statistical} {Learning} with {Wasserstein}
  distances,'' in \emph{Advances in {Neural} {Information} {Processing}
  {Systems}}, vol.~31.\hskip 1em plus 0.5em minus 0.4em\relax Curran
  Associates, Inc., 2018.

\bibitem{Kawaguchi2019GradientDF}
K.~Kawaguchi and J.~Huang, ``Gradient descent finds global minima for
  generalizable deep neural networks of practical sizes,'' \emph{2019 57th
  Annual Allerton Conference on Communication, Control, and Computing
  (Allerton)}, pp. 92--99, 2019.

\bibitem{dinh2017}
L.~Dinh, R.~Pascanu, S.~Bengio, and Y.~Bengio, ``Sharp minima can generalize
  for deep nets,'' in \emph{Proceedings of the 34th International Conference on
  Machine Learning - Volume 70}, ser. ICML'17.\hskip 1em plus 0.5em minus
  0.4em\relax JMLR.org, 2017, p. 1019–1028.

\bibitem{esfahani_data-driven_2017}
P.~M. Esfahani and D.~Kuhn, ``Data-driven {Distributionally} {Robust}
  {Optimization} {Using} the {Wasserstein} {Metric}: {Performance} {Guarantees}
  and {Tractable} {Reformulations},'' \emph{arXiv:1505.05116 [math, stat]},
  Jun. 2017, arXiv: 1505.05116.

\bibitem{pmlr-v119-lin20a}
\BIBentryALTinterwordspacing
T.~Lin, C.~Jin, and M.~Jordan, ``On gradient descent ascent for
  nonconvex-concave minimax problems,'' in \emph{Proceedings of the 37th
  International Conference on Machine Learning}, ser. Proceedings of Machine
  Learning Research, H.~D. III and A.~Singh, Eds., vol. 119.\hskip 1em plus
  0.5em minus 0.4em\relax PMLR, 13--18 Jul 2020, pp. 6083--6093. [Online].
  Available: \url{https://proceedings.mlr.press/v119/lin20a.html}
\BIBentrySTDinterwordspacing

\bibitem{Kingma2014AdamAM}
\BIBentryALTinterwordspacing
D.~P. Kingma and J.~Ba, ``Adam: A method for stochastic optimization,''
  \emph{CoRR}, vol. abs/1412.6980, 2014. [Online]. Available:
  \url{https://api.semanticscholar.org/CorpusID:6628106}
\BIBentrySTDinterwordspacing

\bibitem{shalev_shwartz_understanding_2014}
S.~Shalev-Shwartz and S.~Ben-David, \emph{Understanding {Machine} {Learning}:
  {From} {Theory} to {Algorithms}}.\hskip 1em plus 0.5em minus 0.4em\relax USA:
  Cambridge University Press, 2014.

\end{thebibliography}

\newpage
\appendix
\subsection{Proof of Lemma~\ref{lem:excess_risk}}
\label{proof:excess_risk}

We present a detailed proof of Lemma~\ref{lem:excess_risk}, which establishes that the robust surrogate excess risks derived from the dual WDRO formulation closely approximate the worst-case excess risks under distributional shifts. The result holds for both the semantic-level and channel-level objectives in the proposed bi-level framework in \OurAlg.

In what follows, we unify both levels and present the proof for a generic WDRO problem. Let denote a generic input as \( u \in \mathcal{U} \), where \( \mathcal{U} = \mathcal{X} \) in the semantic level and \( \mathcal{U} = \mathcal{Z} \) in the channel level. Let \( P \) be a probability distribution over \( \mathcal{U} \), and let \( \ell(u; f) \) denote the loss function under hypothesis \( f \in \mathcal{F} \), which is assumed to be \(L\)-Lipschitz in \( u \). Specifically:
\begin{itemize}
  \item For the semantic level, \( u = x \in \mathcal{X} \), \( \ell(u; f) = \ell_{\text{s}}(x; \vartheta) \), where \( f \) parameterizes the semantic encoder-decoder pair \( \vartheta \).
  \item For the channel level, \( u = z \in \mathcal{Z} \), \( \ell(u; f) = \ell_{\text{c}}(s, z; \varphi) \), where \( f \) represents the channel encoder-decoder pair \( \varphi \), and \( s = f_\theta(x) \) is fixed.
\end{itemize}



Let \( Q \in \mathcal{B}_p(P, \rho) \) be a distribution within a \(p\)-Wasserstein ball of radius \( \rho \). We define the robust surrogate risk based on the dual formulation:
\begin{align*}
\phi_\gamma(u; f) &:= \sup_{\tilde{u} \in P} \left\{ f(\tilde{u}) - \gamma \|\tilde{u} - u\|^2 \right\}, \\
\mathscr{L}_\rho^\gamma(P, f) &:= \mathbb{E}_{u \sim P}[\phi_\gamma(u; f)] + \gamma \rho^2.
\end{align*}

We aim to bound the difference between \( \mathscr{E}(Q, f) \) and \( \mathscr{E}_\rho^\gamma(P, f) \), where:
\begin{align}
\mathscr{E}(Q, f) &:= \mathscr{L}(Q, f) - \inf_{f' \in \mathcal{L}} \mathscr{L}(Q, f'), \\
\mathscr{E}_\rho^\gamma(P, f) &:= \mathscr{L}_\rho^\gamma(P, f) - \inf_{f' \in \mathcal{L}} \mathscr{L}_\rho^\gamma(P, f'). 
\end{align}

We first prove the following fact:
	
\textbf{Fact 1}: Surrogate risk upper-bounds the true risk.
\begin{align*}
	\vspace{-1mm}
	\quad (a) \quad &\mathscr{L}(Q, f) \leq  \mathscr{L}_{\rho}^{\gamma}(P, f), \qquad \forall f \in \mathcal{L},  Q \in \gB(P, \rho).   \\
	 (b) \quad  &\inf_{f' \in \mathcal{L}} \mathscr{L}(Q, f') \leq \inf_{f' \in \mathcal{L}} \mathscr{L}_{\rho}^{\gamma}(P, f') , \quad \forall Q \in \gB(P, \rho).  
	\end{align*}
	
	For (a), we have
	\begin{align*}
	&\mathscr{L}(Q,f) \\
	& \leq  \underset{P' \in \gB({P}, \rho)}{\mbox{sup }} \mathscr{L} (P',f) =  \, \underset{\gamma' \geq 0}{{ \inf }} \Big{\{} \gamma' \rho^2 + \mathbf{E}_{u \sim {P}}\Big{[}\phi_\gamma(u,  f)\Big{]}  \Big{\}} \\
	&  \leq   \gamma \rho^2 + \mathbf{E}_{x \sim {P}}\Big{[}\phi_\gamma(u,  f)\Big{]}  \eqdef \mathscr{L}_{\rho}^{\gamma}(P,f),
	\end{align*}
	where the equality is due to strong  duality result by Gao et al.~\cite{gao_distributionally_2016}. 

	For (b), defining  $f_{P} \defeq \argmin_{f' \in \mathcal{L}} \mathscr{L}_{\rho}^{\gamma}(P, f')$, we have 
	\begin{align} 
	\inf_{f' \in \mathcal{L}} \mathscr{L}(Q, f') & \leq \mathscr{L}(Q,f_{P}) \\
    &  \leq  \underset{P' \in \gB({P}, \rho)}{\mbox{sup }} \mathscr{L} (P',f_{P})  \\ 
	& = \underset{\gamma' \geq 0}{{ \inf }} \Big{\{} \gamma' \rho^2 + \mathbf{E}_{x \sim {P}}\Big{[}\phi_\gamma(u,  f_{P})\Big{]}  \Big{\}} \\ 
	& \leq \gamma \rho^2 + \mathbf{E}_{x \sim {P}}\Big{[}\phi_\gamma(u,  f_{P})\Big{]} \\
	& = \inf_{f' \in \mathcal{L}} \mathscr{L}_{\rho}^{\gamma}(P, f'). 
	\end{align}
	
	We next prove the second fact:
	
	\textbf{Fact 2:} Surrogate risk is close to true risk.
	\begin{align*}
	\vspace{-1mm}
	\quad (a) \quad &\mathscr{L}_{\rho}^{\gamma}(P,f) \leq  \mathscr{L}(Q,f) + 2L\rho   + \abs{\gamma - \gamma^*} {\rho}^2, \\
	& \myquad[10] \forall f \in \gF, Q \in \gB(P, \rho)   \\
	(b) \quad  &\inf_{f' \in \mathcal{L}} \mathscr{L}_{\rho}^{\gamma}(P, f') \leq   \inf_{f' \in \mathcal{L}} \mathscr{L}(Q, f') + 2L \rho   + \abs{\gamma - \gamma^*} {\rho}^2.
	\end{align*}
	
	For (a), we have:
	\begin{align*} 
	&\mathscr{L}_{\rho}^{\gamma}(P,f) \\  
	& =  \biggC{ \underset{P' \in \gB({P}, \rho)}{\mbox{sup }} \mathscr{L} (P',f)} + \BigC{\mathscr{L}_{\rho}^{\gamma}(P,f)  - \underset{P' \in \gB({P}, \rho)}{\mbox{sup }} \mathscr{L} (P',f)} \\
	&\leq  \BigC{ \mathscr{L}(Q,f) + 2L\rho} + \biggC{\mathbf{E}_{u \sim P}[\phi_\gamma(u, f)]  + {\rho}^2\gamma \\
	& \myquad[12] - \min_{\gamma^{\prime} \geq 0} \Big\{ {\rho}^2 \gamma^{\prime} + \mathbf{E}_{u \sim P}[\phi_{\gamma'}(u, f)] \Big\}} \\
	& \leq   \mathscr{L}(Q,f) + 2L\rho +  {\rho}^2(\gamma - \gamma^*) + \mathbf{E}_{u \sim P}\Big[\phi_{\gamma}(u, f) - \phi_{\gamma^*}(u, f)\Big] \\
	& =  \begin{aligned}[t] & \mathscr{L}(Q,f) + 2L\rho + {\rho}^2(\gamma - \gamma^*) \\
		&+ \mathbf{E}_{u \sim P}\bigg[ \sup_{\zeta \in \mathcal{Z}} \Big\{ \ell(\zeta,h)  - \gamma d(\zeta,  u) \Big\}  - \sup_{\zeta \in \mathcal{Z}} \Big\{ \ell(\zeta,h) - \gamma^* d^2(\zeta,  u)  \Big\} \bigg] \end{aligned} \\
	& =  \mathscr{L}(Q,f) + 2L\rho +  (\gamma - \gamma^*) \BigP{{\rho}^2 - \mathbf{E}_{u \sim P}\Big[\sup_{\zeta \in \mathcal{Z}} d^2(\zeta,  u)\Big] }\\
	& \leq  \mathscr{L}(Q,f) + 2L\rho  + \abs{\gamma - \gamma^*}  {\rho}^2,
	\end{align*}
	where the first inequality is due to Proposition~\ref{pro1}, and the last inequality is because we choose $\gamma \geq L/\rho$ and that fact that $\gamma^* \leq L/\rho$ by Lemma 1 of~\cite{lee_minimax_2018}. 


%
	
	For (b), defining  $f_Q \defeq \argmin_{f \in \mathcal{L}} \mathscr{L}(Q, f)$, we have
	
	\begin{align}
	\inf_{f' \in \mathcal{L}} \mathscr{L}_{\rho}^{\gamma}(P, f') &\leq  \mathscr{L}_{\rho}^{\gamma}(P, f_Q)  \\
	&\leq \mathscr{L}(Q, f_{Q}) + 2L \rho   + \abs{\gamma - \gamma^*} {\rho}^2 \\
	&= \inf_{f' \in \mathcal{L}} \mathscr{L}(Q, f') + 2L \rho   + \abs{\gamma - \gamma^*} {\rho}^2,
	\end{align}
	
	where the second line is due to \textbf{Fact 2}(a). 

Combining the bounds:

Let us denote the robust excess risk and the worst-case excess risk as:
\begin{align*}
\mathscr{E}_\rho^\gamma(P, f) &= \mathscr{L}_\rho^\gamma(P, f) - \inf_{f'} \mathscr{L}_\rho^\gamma(P, f'), \\
\mathscr{E}(Q, f) &= \mathscr{L}(Q, f) - \inf_{f'} \mathscr{L}(Q, f'). 
\end{align*}
    
Using Fact 1(a) and Fact 2(b), we obtain:
\[
\mathscr{E}(Q, f) \leq \mathscr{E}_\rho^\gamma(P, f).
\]

Using Fact 2(a) and Fact 1(b), we get:
\[
\mathscr{E}_\rho^\gamma(P, f) \leq \mathscr{E}(Q, f) + 2 L \rho + |\gamma - \gamma^*| \rho^2.
\]

Therefore:
\[
|\mathscr{E}(Q, f) - \mathscr{E}_\rho^\gamma(P, f)| \leq 2 L \rho + |\gamma - \gamma^*| \rho^2.
\]

Apply to both levels:

- For semantic loss \( h_{\text{s}}(x) := \ell_{\text{s}}(x; \vartheta) \), use \( L := L_{\text{s}}, \gamma := \lambda \), and \( \rho \) for semantic Wasserstein radius.

- For channel loss \( h_{\text{c}}(z) := \ell_{\text{c}}(s, z; \varphi) \), use \( L := L_{\text{c}}, \gamma := \gamma \), and \( \rho := \mu \) for channel Wasserstein radius.

This completes the proof of Lemma~\ref{lem:excess_risk}.

	
Finally, we provide the proof of the following proposition that was used in proving \textbf{Fact 2}(a).
    
	\begin{proposition} \label{pro1} Let Assumption~\ref{ass:lipschitz} (a) holds. For any $f \in \mathcal{L}$ and for all  $Q \in \gB(P, \rho)$, we have
	
		\begin{align*}
		\underset{P' \in \gB({P}_{\lambda}, \rho)}{\sup} \mathscr{L} (P',f) \leq \mathscr{L}(Q,f) + 2 L \rho. 
		\end{align*}
		
	\end{proposition}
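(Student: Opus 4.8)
The plan is to prove the bound by a direct optimal-transport (coupling) argument, treating $\mathscr{L}(\cdot,f)$ as a linear functional of the distribution and exploiting the $L$-Lipschitz continuity of $f$ supplied by \Cref{ass:lipschitz}(a). Since $\mathscr{L}(P',f)=\mathbb{E}_{u\sim P'}[f(u)]$, for the $W_p$-optimal coupling $\pi$ of $(P',Q)$ one has $\mathscr{L}(P',f)-\mathscr{L}(Q,f)=\mathbb{E}_{(U,U')\sim\pi}[f(U)-f(U')]\le L\,\mathbb{E}_\pi[\|U-U'\|]\le L\,W_p(P',Q)$, where the middle step is Lipschitzness and the last uses Hölder together with $W_1\le W_p$. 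Hence the whole statement reduces to the geometric claim $W_p(P',Q)\le 2\rho$ for every $P'\in\gB(P_\lambda,\rho)$ and every $Q\in\gB(P,\rho)$.

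To bound $W_p(P',Q)$ I will route through the center of the supremum ball and apply the triangle inequality for the $p$-Wasserstein metric, $W_p(P',Q)\le W_p(P',P_\lambda)+W_p(P_\lambda,Q)$. The first term is at most $\rho$ directly from the hypothesis $P'\in\gB(P_\lambda,\rho)$. The second term is where the choice of center is decisive: the supremum is anchored at $P_\lambda$, whereas $Q$ is only assumed to lie in the ball centered at $P$, so I must certify $W_p(P_\lambda,Q)\le\rho$. This holds because the two balls share a center in the setting where \Cref{pro1} is used: inside the proof of \textbf{Fact 2}(a) the penalized surrogate $\mathscr{L}_\rho^\gamma(P,f)$ and its strong-duality expansion are both formed at $P$, so $P_\lambda=P$ and therefore $Q\in\gB(P,\rho)=\gB(P_\lambda,\rho)$, giving $W_p(P_\lambda,Q)\le\rho$. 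Adding the two pieces yields $W_p(P',Q)\le 2\rho$, and combining with the Lipschitz estimate and taking the supremum over $P'\in\gB(P_\lambda,\rho)$ gives $\sup_{P'\in\gB(P_\lambda,\rho)}\mathscr{L}(P',f)\le\mathscr{L}(Q,f)+2L\rho$, as required.

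To keep the triangle inequality from being a black box (and to make the centering at $P_\lambda$ fully explicit), I will construct the coupling of $(P',Q)$ via the gluing lemma: take an optimal coupling $\pi_1$ of $(P',P_\lambda)$ and an optimal coupling $\pi_2$ of $(P_\lambda,Q)$, glue them along their common $P_\lambda$-marginal to obtain a coupling $\pi$ of $(P',Q)$, and control $\bigl(\mathbb{E}_\pi[\|U-U'\|^p]\bigr)^{1/p}$ by the $L^p$ triangle inequality applied to the transport displacements through the intermediate $P_\lambda$-variable. The main obstacle is exactly the centering issue: because the supremum ball sits at $P_\lambda$ while $Q$ ranges over the ball at $P$, the constant $2$ is only attainable once the relationship $P_\lambda=P$ is made explicit, which is precisely the configuration in which the proposition is invoked. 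I will therefore state up front that $P_\lambda$ denotes the center at which the surrogate risk is built (equal to $P$ in the application), so that $Q\in\gB(P_\lambda,\rho)$ and the two transport legs each cost at most $\rho$; were $P_\lambda$ and $P$ permitted to be $\rho$-apart, the same argument would only deliver the weaker bound $3L\rho$.
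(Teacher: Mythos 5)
Your proof is correct and takes essentially the same route as the paper's: bound the loss gap $\mathscr{L}(P',f)-\mathscr{L}(Q,f)$ by $L$ times a Wasserstein distance using Lipschitzness, then split that distance by the triangle inequality through the ball's center to obtain $2\rho$. The only cosmetic differences are that the paper invokes the Kantorovich--Rubinstein dual representation (together with $W_1 \le W_2$) where you use the direct optimal-coupling/gluing argument, and that you state explicitly the identification $P_\lambda = P$, which the paper leaves implicit when it bounds $W_2(P^*,P) \le \rho$ in its triangle-inequality step.
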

	\begin{proof} Denote $P^* \defeq \underset{P' \in \gB({P}_{\lambda}, \rho)}{\argmax} \mathscr{L} (P',f)$. We have
		\begin{align}
		&\underset{P' \in \gB({P}_{\lambda}, \rho)}{\mbox{sup }} \mathscr{L} (P',f)\\ &=  \mathscr{L}(Q,f)  + \underset{P' \in \gB({P}_{\lambda}, \rho)}{\mbox{sup }} \mathscr{L} (P',f) - \mathscr{L}(Q,f) \nonumber\\ 
		& \leq \mathscr{L}(Q,f)  +  \abs*{\mathscr{L} (P^*,f) - \mathscr{L} (Q,f) }, \nonumber\\
		& \leq \mathscr{L}(Q,f)  +  L \abs*{\mathbf{E}_{u \sim P^*} \bigS{\ell(u, f)/L} - \mathbf{E}_{x \sim Q} \bigS{\ell(u,f)/L}}\nonumber \\
		& \leq  \mathscr{L}(Q,f) + L W_1  (P^*, Q)  \nonumber\\
		& \leq  \mathscr{L}(Q,f) + L \bigS{W_2  (P^*, P) + W_2(P, Q)} \label{E:triangle2}\\
		& \leq \mathscr{L}(Q,f) + L 2 \rho, \nonumber
		\end{align}
		where the  fourth line is due to the Kantorovich-Rubinstein dual representation theorem, i.e., 
		\begin{align*}
		W_1(P, Q) = \sup_{h} \BigC{ \mathbf{E}_{u \sim P} \bigS{f(u)} - \mathbf{E}_{x \sim Q} \bigS{f(u)}: \\ 
			f(\cdot) \text{ is 1-Lipschitz}}
		\end{align*}
		and the fifth line is due to $W_1  (P^*, Q) \leq W_2  (P^*, Q)$ and triangle inequality. 
	\end{proof}

\subsection{Proof of Theorem~\ref{thrm:excess_risk}}
\label{proof:thrm_excess_risk}

To analyze the generalization performance of \OurAlg under distributional shifts, we adopt a unified theoretical framework that applies to both the semantic-level and channel-level WDRO objectives. Specifically, we abstract the analysis by defining a generic input domain $\mathcal{U}$—where $\mathcal{U} = \mathcal{X}$ corresponds to semantic inputs and $\mathcal{U} = \mathcal{Z}$ corresponds to received channel signals. The loss function $\ell(u; f)$ captures either the semantic reconstruction loss $\ell_{\text{s}}(x; \vartheta)$ or the channel distortion loss $\ell_{\text{c}}(s, z; \varphi)$, depending on the level of analysis. This unified notation enables a general proof strategy using WDRO theory and empirical process tools. By bounding the deviation between the empirical and population surrogate risks, and controlling the approximation gap between surrogate and worst-case risks, we derive a robust generalization guarantee that applies to both optimization layers in the bi-level setting.

\begin{proof}
	To simplify notation, we denote $ \Phi \defeq \phi_{\gamma} \circ \mathcal{F} = \set{u \mapsto \phi_{\gamma}(u, f), f \in \mathcal{L}}$ where $\mathcal{F} = \bigC{f_{\theta}, \theta \in \Theta  \subset \mathbb{R}^d}$, which represents the composition of $\phi_{\gamma}$ with each of the loss function $f_{\theta}$ parametrized by $\theta$ belonging to the parameter class $\Theta$. 
	
	Defining $f_{P} \in \argmin_{f \in \mathcal{L}} \mathscr{L}_{\rho}^{\gamma}(P, f)$ and $ \widehat{\theta}^* \in \underset{\theta \in \Theta} {\operatorname{argmin}} \ \mathbf{E}_{u \sim \widehat{P}} \bigS{\phi_{\gamma}(u,  f_{\theta})}$ such that $\mathscr{L}_{\rho}^{\gamma}(\widehat{P}, {f}_{{\theta}^*}) = \underset{\theta \in \Theta} {\operatorname{inf}}  \BigS{\mathbf{E}_{u \sim \widehat{P}} \bigS{\phi_{\gamma}(u,  f_{\theta})}  + \gamma \rho^2 }$, we  decompose the excess risk as follows:
	
	\begin{align}
	&\mathscr{E}_{\rho}^{\gamma}(P, {f}_{\widehat{\theta}^{\varepsilon}})\\	&=\mathscr{L}_{\rho}^{\gamma}(P, {f}_{\widehat{\theta}^{\varepsilon}}) - \inf_{f \in \mathcal{L}}  \mathscr{L}_{\rho}^{\gamma}(P, f) \nonumber\\
	&= \mathscr{L}_{\rho}^{\gamma}(P, {f}_{\widehat{\theta}^{\varepsilon}}) -   \mathscr{L}_{\rho}^{\gamma}(P, f_{P})\nonumber\\
	&=  \BigS{\mathscr{L}_{\rho}^{\gamma}(P, {f}_{\widehat{\theta}^{\varepsilon}}) - \mathscr{L}_{\rho}^{\gamma}(\widehat{P}, {f}_{\widehat{\theta}^{\varepsilon}})} + \underbrace{\BigS{\mathscr{L}_{\rho}^{\gamma}(\widehat{P}, {f}_{\widehat{\theta}^{\varepsilon}}) - \mathscr{L}_{\rho}^{\gamma}(\widehat{P}, {f}_{\widehat{\theta}^{*}})} }_{\leq \varepsilon}   \nonumber\\
	& \,  + \underbrace{\BigS{\mathscr{L}_{\rho}^{\gamma}(\widehat{P}, {f}_{\widehat{\theta}^{*}}) - \mathscr{L}_{\rho}^{\gamma}(\widehat{P}, f_{P})}}_{\leq 0} + \BigS{\mathscr{L}_{\rho}^{\gamma}(\widehat{P}, f_{P}) - \mathscr{L}_{\rho}^{\gamma}(P, f_{P})} \nonumber\\
	&\leq 2 \sup_{\phi_{\gamma} \in \Phi} \abs*{ \mathbf{E}_{u \sim P}[\phi_{\gamma}(u, f_{\theta})] -   \mathbf{E}_{u \sim \widehat{P}}[\phi_{\gamma}(u, f_{\theta})] } + \varepsilon \nonumber\\
	&\leq 2 \sup_{\phi_{\gamma} \in \Phi}  \sum_{i=1}^{m} \lambda_i \abs*{\mathbf{E}_{Z_i \sim P_i}[\phi_{\gamma}(u_i, f_{\theta})] -   \mathbf{E}_{Z_i \sim \widehat{P}_{i}}[\phi_{\gamma}(u_i, f_{\theta})]}  + \varepsilon \nonumber\\
	&\leq 2   \sum_{i=1}^{m} \lambda_i \sup_{\phi_{\gamma} \in \Phi} \abs*{\mathbf{E}_{Z_i \sim P_i}[\phi_{\gamma}(u_i, f_{\theta})] -   \mathbf{E}_{Z_i \sim \widehat{P}_{i}}[\phi_{\gamma}(u_i, f_{\theta})]}  + \varepsilon \nonumber\\
    \end{align}
    \begin{align}
	& \leq  \sum_{i=1}^{m} \lambda_i  \biggS{4 \mathscr{R}_{i}(\Phi)+  2 M_{\ell} \sqrt{\frac{2 \log (2 m / \delta)}{n_i}}} + \varepsilon\, \\ &\text{with probability at least }  1-\delta, \label{unionbound1}
	\end{align}
	
	where the first inequality is due to optimization error and definition of $\widehat{\theta}^{*}$. The second inequality is due to the fact that $ \abs{\sum_{i=1}^m \lambda_i a_i } \leq \sum_{i=1}^m \lambda_i  \abs{a_i}, \forall a_i \in \mathbb{R}$ and $\lambda_i \geq 0$. The third inequality is because pushing the $\sup$ inside increases the value. For the last inequality, using  the facts that (i) $\abs{\phi_\gamma(u,f)} \leq M_{\ell}$  due to $-M_{\ell} \leq \ell(u, f) \leq \phi_{\gamma}(u, f) \leq \sup _{u \in \mathcal{U}} \ell(u, f) \leq M_{\ell}$ and (ii) the Rademacher complexity of the function class $\Phi$ defined by $\mathscr{R}_{i}(\Phi) = \mathbf{E}[\sup _{\phi_{\gamma} \in \Phi} \frac{1}{n_i} \sum_{k=1}^{n_i} \sigma_{k} \phi_\gamma(u_k,  f_{\theta})] $ where the expectation is w.r.t both $u_k \stackrel{\text { i.i.d. }}{\sim} P_i$ and i.i.d. Rademacher random variable $\sigma_{k}$ independent of $u_k, \forall k \in [n_i]$,  we have 
	
	\begin{align}
	\sup_{\phi_{\gamma} \in \Phi} & \abs*{\mathbf{E}_{Z_i \sim P_i}[\phi_{\gamma}(u_i, f_{\theta})] -   \mathbf{E}_{Z_i \sim \widehat{P}_{i}}[\phi_{\gamma}(u_i, f_{\theta})]}  \\ 
	&\geq 2 \mathscr{R}_{i}(\Phi)+   M_{\ell} \sqrt{\frac{2 \log (2 m / \delta)}{n_i}} \label{Rade1}
	\end{align}
	
	with probability $\leq \delta/m$ due to the standard symmetrization argument and McDiarmid's inequality~{\cite[Theorem~26.5]{shalev_shwartz_understanding_2014}}. Multiplying $\lambda_i$ to both sides of \cref{Rade1}, summing up the inequalities over all $i\in [n]$, and using union bound, we obtain \cref{unionbound1}.
	
	Define a stochastic process $\left(X_{\phi_{\gamma}}\right)_{\phi_{\gamma} \in \Phi}$ 
	$$
	X_{\phi_{\gamma}}:=\frac{1}{{\sqrt{n_i}}} \sum_{k=1}^{n_i} \sigma_{k} \phi_{\gamma}(u_k, f_{\theta})
	$$
	which is  zero-mean because $\mathbf{E}\left[X_{\phi_{\gamma}}\right]=0$ for all $\phi_{\gamma} \in \Phi$.  To upper-bound  $\mathscr{R}_{n}(\Phi)$, we first show that $\left(X_{\phi_{\gamma}}\right)_{\phi_{\gamma} \in \Phi}$ is a sub-Gaussian process with respect to the following pseudometric
	\begin{align}
	\left\|\phi_{\gamma}-\phi_{\gamma}^{\prime}\right\|_{\infty}  \defeq  \sup_{z \in \mathcal{Z}} \Big\lvert  {\phi_{\gamma}(u, f_{\theta}) - \phi_{\gamma}(u, f_{\theta'})}  \Big\rvert . 
	\end{align}  
	
	For any $t \in \mathbb{R}$,  using Hoeffding inequality with the fact that $\sigma_{k}, k \in [n]$, are i.i.d. bounded random variable with sub-Gaussian  parameter 1, we have
	$$
	\begin{aligned}
	&\mathbf{E}\left[\exp \left(t\left(X_{\phi_{\gamma}}-X_{\phi_{\gamma}^{\prime}}\right)\right)\right]\\ &=\mathbf{E}\left[\exp \left(\frac{t}{\sqrt{n_i}} \sum_{k=1}^{n_i} \sigma_{k}\left(\phi_{\gamma}\left(u_k, f_{\theta}\right)-\phi\left(u_k, f_{\theta'}\right)\right)\right)\right] \\
	&=\left(\mathbf{E}\left[\exp \left(\frac{t}{{\sqrt{n_i}}} \sigma_{1}\left(\phi_{\gamma}\left(u_{1}, f_{\theta}\right)-\phi_{\gamma}\left(u_{1}, f_{\theta'}\right)\right)\right)\right]\right)^{n_i} \\
	& \leq \exp \left(\frac{t^{2} \left\|\phi_{\gamma}-\phi_{\gamma}^{\prime}\right\|_{\infty}^2}{2 }\right). 
	\end{aligned}
	$$
	Then, invoking Dudley entropy integral, we have
	\begin{align} \label{E:Dudley}
	\sqrt{n_i}\, {\mathscr{R}}_{i}(\Phi) = \mathbf{E} \sup_{\phi_{\gamma} \in \Phi} X_{\phi_{\gamma}} \leq {12} \int_{0}^{\infty} \sqrt{\log \mathcal{N}\left(\Phi, \norm{\cdot}_{\infty}, \epsilon\right)} \mathrm{d} \epsilon
	\end{align}
	
	We will show that when $\theta \mapsto \ell(u,f_{\theta})$ is $L$-Lipschitz by Assumption~\ref{ass:lipschitz}, then $\theta \mapsto \phi_{\gamma}(u, f_{\theta})$ is also $L$-Lipschitz as follows. 
	\begin{align*}
	&\Big|\phi_{\gamma}(u, f_{\theta}) - \phi_{\gamma}(u, f_{\theta'})\Big| \\
	&=  \Big| \sup_{\zeta \in \mathcal{Z}} \inf_{\zeta' \in \mathcal{Z}} \Big\{ \ell(\zeta,f_{\theta})  - \gamma d(\zeta,  z)   -   \ell(\zeta', f_{\theta'}) + \gamma d(\zeta',  z)  \Big\} \Big| \\
	&\leq  \Big| \sup_{\zeta \in \mathcal{Z}} \Big\{ \ell(\zeta,f_{\theta})    -   \ell(\zeta, f_{\theta'})   \Big\} \Big| \leq \sup_{\zeta \in \mathcal{Z}}  \Big|  \ell(\zeta,f_{\theta})    -   \ell(\zeta, f_{\theta'})    \Big| \\
	& \leq L_{\theta}   \norm{\theta - \theta'}, \\
	\end{align*}
	which implies
	\begin{align*}
	\left\|\phi_{\gamma}-\phi_{\gamma}^{\prime}\right\|_{\infty} \leq  L   \norm{\theta - \theta'}.
	\end{align*}
	Therefore, by contraction principle~\cite{shalev_shwartz_understanding_2014}, we have
	\begin{align} \label{E:contraction_principle}
	\mathcal{N}\left(\Phi, \norm{\cdot}_{\infty}, \epsilon\right) \leq \mathcal{N}\left(\Theta, \norm{\cdot}, \epsilon/L_{\theta}\right).
	\end{align}
	Substituting \cref{E:contraction_principle} and \cref{E:Dudley} into \cref{unionbound1}, we obtain
	\begin{align}
	\mathscr{E}_{\rho}^{\gamma}(P, {f}_{\widehat{\theta}^{\varepsilon}}) \leq  \sum_{i=1}^{m} \lambda_i \BiggS{\frac{48 \mathscr{C}(\Theta)}{\sqrt{n_i}} + 2 M_{\ell} \sqrt{\frac{2 \log (2  m / \delta)}{n_i}}} +\varepsilon,
	\end{align}
	which will be substituted into the upper-bound in Lemma~\ref{lem:excess_risk} to complete the proof. 

This unified bound establishes a robust guarantee for both optimization layers in \OurAlg. By instantiating the general input $u$, loss $\ell(u; f)$, and parameters $(L, \lambda, \rho)$ appropriately at each level, i.e., semantic-level ($u = x$, $\ell = \ell_{\text{s}}$, $L = L_{\text{s}}$, $\lambda = \lambda$, $\rho = \rho$) and channel-level ($u = z$, $\ell = \ell_{\text{c}}$, $L = L_{\text{c}}$, $\lambda = \gamma$, $\rho = \mu$), the same analysis yields tight control over the excess risk under distributional shifts at each layer. Thus, Lemma~\ref{lem:excess_risk} ensures that minimizing the dual surrogate objectives in \OurAlg implicitly limits worst-case degradation from both semantic input perturbations and channel noise, providing theoretical justification for the bilevel robust learning formulation.

\end{proof}

\section{Hyperparameter Settings and Tuning Strategy}
\label{app:hyperparameters}

\blue{
\subsection{Specific Parameter Settings Used in Experiments}

The following parameters were used to generate the main results presented in this paper:
\begin{itemize}
    \item \textbf{Wasserstein Radii ($\rho, \mu$):} We set the semantic-level radius to $\rho=0.05$ and the channel-level radius to $\mu=0.01$ for all experiments. These values were determined through our sensitivity analysis, which demonstrated that this configuration provides an optimal balance between robustness and fidelity under adversarial conditions. The semantic radius $\rho$ is implemented practically through the use of FGSM adversarial attacks with a perturbation budget corresponding to the desired robustness level, which approximates the worst-case perturbations within the semantic Wasserstein ball.

    \item \textbf{Dual Variables ($\lambda, \gamma$):} These variables are intrinsic to the dual WDRO formulation and are not manually tuned hyperparameters. They were initialized to a value of 1.0 and subsequently updated automatically via gradient descent as part of the alternating optimization process detailed in our training algorithm. The optimization process finds their values to enforce the robustness constraints defined by $\rho$ and $\mu$.

    \item \textbf{Smoothing Parameter ($\epsilon$):} This hyperparameter controls the tightness of the log-sum-exp approximation to the supremum operation in our dual formulation. We set $\epsilon=0.1$ for all experiments. This value was found to provide an effective balance between approximation accuracy and training stability for the large-scale AI models (ViT, BERT) used in our framework.
\end{itemize}

\subsection{General Guidance for Parameter Tuning}

The effectiveness of the framework relies on selecting appropriate values for its key parameters, which govern the trade-off between average-case performance and worst-case robustness.

\begin{itemize}
    \item \textbf{Tuning the Wasserstein Radii ($\rho$ and $\mu$):} The radii, $\rho$ (semantic) and $\mu$ (channel), are the primary regularization hyperparameters. They control the level of robustness by defining the size of the ambiguity sets; smaller radii yield solutions closer to standard empirical risk minimization, while larger radii enforce greater robustness against significant distributional shifts. A standard and effective method for their selection is cross-validation, where the model is trained with several candidate values and evaluated on a validation set containing a mix of clean and perturbed data relevant to the target application.

    \item \textbf{Handling the Dual Variables ($\lambda$ and $\gamma$):} It is important to note that these variables are not manually tuned. As variables within the dual optimization problem, they are initialized (e.g., to 1.0) and subsequently updated via gradient descent during the end-to-end training process. The algorithm naturally finds their values to enforce the robustness constraints.

    \item \textbf{Tuning the Smoothing Parameter ($\epsilon$):} The parameter $\epsilon$ is a tunable hyperparameter that controls the smoothness of the approximation to the worst-case loss. Its value should be selected, typically via cross-validation, to ensure stable training convergence while keeping the approximation of the true robust objective sufficiently tight. Our experiments indicate that a value of $\epsilon=0.1$ provides a good balance for large-scale AI models.
\end{itemize}
}
\vfill

\end{document}